\newtheorem{theorem}{Theorem}[section]
\newtheorem{proposition}[theorem]{Proposition}
\newtheorem{corollary}[theorem]{Corollary}
\newtheorem{lemma}[theorem]{Lemma}
\newtheorem{remark}[theorem]{Remark}
\numberwithin{equation}{section}
\DeclareMathOperator*{\conv}{conv}
\DeclareMathOperator*{\E}{\mathbb{E}}
\DeclareMathOperator*{\sign}{sign}
\DeclareMathOperator*{\diam}{diam}
\DeclareMathOperator*{\dist}{dist}
\def \qed {$\blacksquare$}
\def \N {\mathbb{N}}
\def \R {\mathbb{R}}
\def \P {\mathbb{P}}
\def \one {{\bf 1}}
\def \NN {\mathcal{N}}
\def \b {\beta}
\def \e {\varepsilon}
\def \d {\delta}
\def \s {\sigma}
\def \v {\mathrm{v}}
\def \moo| {\langle}
\def \< {\langle }
\def \> {\rangle }
\def \^ {\widehat}
\newcommand{\norm}[1]{\left \|#1\right \|}
\newcommand{\zeronorm}[1]{\norm{#1}_0}
\newcommand{\onenorm}[1]{\norm{#1}_1}
\newcommand{\twonorm}[1]{\norm{#1}_2}
\newcommand{\opnorm}[1]{\norm{#1}}
\newcommand{\fronorm}[1]{\norm{#1}_F}
\newcommand{\nucnorm}[1]{\norm{#1}_*}
\newcommand{\abs}[1]{\left | #1 \right |}
\renewcommand{\Pr}[1]{\P \left\{ #1 \rule{0mm}{3mm}\right\}}
\def \sech{\text{sech}}
\def \tanh{\text{tanh}}
\newcommand{\vect}[1]{\bm{#1}}
\newcommand{\mat}[1]{\bm{#1}}
\def \va {\vect{a}}
\def \vg {\vect{g}}
\def \vu {\vect{u}}
\def \vv {\vect{v}}
\def \vx {\vect{x}}
\def \vy {\vect{y}}
\def \vz {\vect{z}}
\def \veta {\vect{\eta}}
\def \mA {\mat{A}}
\def \mG {\mat{G}}
\def \mP {\mat{P}}
\def \mX {\mat{X}}
\title{Robust 1-bit compressed sensing and sparse logistic regression: a convex programming approach}
\author{Yaniv Plan}
\author{Roman Vershynin}
\address{Department of Mathematics,
  University of Michigan,
  530 Church St.,
  Ann Arbor, MI 48109, U.S.A.}
\email{\{yplan,romanv\}@umich.edu}
\subjclass[2000]{94A12; 60D05; 90C25}
\thanks{Y.P. is supported by an NSF Postdoctoral Research Fellowship under award No. 1103909. 
  R.V. is supported by NSF grants DMS 0918623 and 1001829.}
\date{Submitted February 2012; revised July 2012}
\subjclass[2000]{94A12; 60D05; 90C25}
\begin{document}

\begin{abstract}
This paper develops theoretical results regarding noisy 1-bit compressed sensing and sparse binomial regression.
We demonstrate that a single convex program gives an accurate estimate of the signal, 
or coefficient vector, for both of these models.  
We show that an $s$-sparse signal in $\R^n$ can be accurately estimated
from $m = O(s\log (n/s))$ single-bit measurements using a simple convex program. 
This remains true even if each measurement bit is flipped with probability nearly 1/2. 
Worst-case (adversarial) noise can also be accounted for, and uniform results that hold for all sparse inputs are derived as well.  
In the terminology of sparse logistic regression, we show that $O(s \log(2n/s))$ Bernoulli trials are sufficient 
to estimate a coefficient vector in $\R^n$ which is approximately $s$-sparse.  
Moreover, the same convex program works for virtually all generalized linear models, in which the link function may be unknown.
To our knowledge, these are the first results that tie together the theory of sparse logistic regression to 1-bit compressed sensing.  
Our results apply to general signal structures aside from sparsity; one only needs to know the size of the set $K$ where signals reside. 
The size is given by the mean width of $K$, a computable quantity whose square serves as a robust extension of the dimension. 
\end{abstract}

\maketitle

\section{Introduction}

\subsection{One-bit compressed sensing}

In modern data analysis, a pervasive challenge is to recover extremely high-dimensional signals from seemingly inadequate amounts of data.  
Research in this direction is being conducted in several areas including compressed sensing, 
sparse approximation 
and  low-rank matrix recovery. 
The key is to take into account the signal structure, which in essence reduces the dimension of the signal space.  
In compressed sensing and sparse approximation, this structure is {\em sparsity}---we say that a vector in $\R^n$ 
is $s$-sparse if it has $s$ nonzero entries.  In low-rank matrix recovery, one restricts to matrices with low-rank.

The standard assumption in these fields is that one has access to \textit{linear measurements} of the form 
\begin{equation}				\label{eq:linear}
y_i = \< \va_i, \vx \> , \qquad i = 1, 2, \hdots, m
\end{equation}
where $\va_1, \va_2, \hdots, \va_m \in \R^n$ are known measurement vectors and $\vx \in \R^{n}$ is the signal to be recovered.
Typical compressed sensing results state that when $\va_i$ are iid random vectors drawn from a certain  
distribution (e.g.~Gaussian), $m \sim s \log(2n/s)$ measurements suffice
for robust recovery of $s$-sparse signals $\vx$, see \cite{CSBook}.

In the recently introduced problem of {\em 1-bit compressed sensing} \cite{Boufounos2008}, 
the measurements are no longer linear but rather consist of single bits. 
If there is no noise, the measurements are modeled as
\begin{equation} 
\label{eq:noiseless}
y_i = \sign(\< \va_i, \vx\> ), \qquad i =1 , 2, \hdots, m
\end{equation}
where $\sign(x) = 1$ if $x \geq 0$ and $\sign(x) = -1$ if $x < 0$.\footnote{For concreteness, we set $\sign(0) = 1$; this choice is arbitrary and could be replaced with $\sign(0) = -1$.} 
On top of this, noise may be introduced as random or adversarial bit flips.  

The 1-bit measurements are meant to model quantization in the extreme case. 
It is interesting to note that when the signal to noise ratio is low, numerical experiments demonstrate 
that such extreme quantization can be optimal \cite{laska2011regime} when constrained to a fixed bit budget. 
The webpage \verb=http://dsp.rice.edu/1bitCS/= is dedicated to the rapidly growing literature on 1-bit compressed sensing. 
Further discussion of this recent literature will be given in Section \ref{sec:1-bit CS}; 
we note for now that this paper presents the first theoretical accuracy guarantees in the noisy problem using
a polynomial-time solver (given by a convex program).

\subsection{Noisy one-bit measurements}
We propose the following general model for noisy 1-bit compressed sensing. 
We assume that the measurements, or response variables, 
$y_i \in \{-1,1\}$ are drawn independently at random satisfying
\begin{equation}				\label{eq:glm}
\E y_i = \theta(\< \va_i, \vx \> ), \qquad i =1 , 2, \hdots, m
\end{equation}
where $\theta$ is some function, which automatically must satisfy $-1 \leq \theta(z) \leq 1$.
A key point in our results is that {\em $\theta$ may be unknown} or unspecified; 
one only needs to know the measurements $y_i$ and the measurement vectors $\va_i$ in order to recover $\vx$.  Thus, there is an unknown non-linearity in the measurements.  See \cite{boufounos2010reconstruction, laska2011regimethesis} for earlier connections between the 1-bit problem and non-linear measurements.

In compressed sensing it is typical to choose the measurement vectors $\va_i$ at random, see \cite{CSBook}. 
In this paper, we choose $\va_i$ to be independent standard Gaussian random vectors in $\R^n$. 
Although this assumption can be relaxed to allow for correlated coordinates (see Section \ref{sec:covariance}),
discrete distributions are not permitted. Indeed, unlike traditional compressed sensing, 
accurate noiseless 1-bit compressed sensing is provably impossible for some discrete 
distributions of $\va_i$ (e.g.~for Bernoulli distribution, see \cite{pv-1-bit}). 
Summarizing, the model \eqref{eq:glm} has two sources of randomness: 
\begin{enumerate}
  \item the measurement vectors $\va_i$ are independent standard Gaussian random vectors;
  \item given $\{\va_i\}$, the measurements $y_i$ are independent $\{-1,1\}$ valued random variables.
\end{enumerate}

\medskip

Note that \eqref{eq:glm} is the {\em generalized linear model} in statistics, and $\theta$ is known as the inverse of the \textit{link function};
the particular choice $\theta(z) = \tanh(z/2)$ coresponds to logistic regression. 
The statisticians may prefer to switch $\vx$ with $\vect{\beta}$, $\va_i$ with $\vx_i$, $n$ with $p$ and $m$ with $n$, 
but we prefer to keep our notation which is standard in compressed sensing.

\medskip

Notice that in the noiseless 1-bit compressed sensing model \eqref{eq:noiseless}, 
all information about the magnitude of $\vx$ is lost in the measurements.  
Similarly, in the noisy model \eqref{eq:glm} the magnitude of $\vx$ may be absorbed into the definition of $\theta$.  
Thus, our goal will be to estimate the projection of $\vx$ onto the Euclidean sphere, $\vx/\twonorm{\vx}$.  
Without loss of generality, we thus assume that $\twonorm{\vx} = 1$ in most of our discussion that follows.

\medskip

We shall make a single assumption on the function $\theta$ defining the model \eqref{eq:glm}, namely that
\begin{equation}				\label{eq:lambda}
\E \theta(g) g =: \lambda > 0
\end{equation}
where $g$ is standard normal random variable.
To see why this assumption is natural,
notice that $\< \va_i, \vx \> \sim \NN(0,1)$ since 
$\va_i$ are standard Gaussian random vectors and $\twonorm{\vx} = 1$; thus
$$
\E y_i \< \va_i, \vx \> = \E \theta(g) g = \lambda.
$$
Thus our assumption is simply that the 1-bit measurements $y_i$ are positively correlated with the corresponding
linear measurements $\< \va_i, \vx\> $.\footnote{If $\E \theta(g) g < 0$, we could replace $y_i$ with $-y_i$ to change the sign; 
  thus our assumption is really that the correlation is non-zero: $\E \theta(g) g \neq 0$.}
Standard 1-bit compressed sensing \eqref{eq:noiseless} 
is a partial case of model \eqref{eq:glm} with $\theta(z) = \sign(z)$.  In this case $\lambda$ achieves its maximal value: $\lambda = \E |g| = \sqrt{2/\pi}$.  In general, $\lambda$ plays a role similar to a signal to noise ratio.

\subsection{The signal set}						\label{sec:main results}

\medskip

To describe the structure of possible signals, we assume that $\vx$ lies in some set $K \subset B_2^n$ 
where $B_2^n$ denotes the Euclidean unit ball in $\R^n$.
A key characteristic of the size of the signal set is its \textit{mean width} $w(K)$, defined as  \begin{equation}
\label{eq:mean width}
w(K) := \E \sup_{\vx \in K-K} \< \vg, \vx \>
\end{equation}
where $\vg$ is a standard normal Gaussian vector in $\R^n$ and $K-K$ denotes the Minkowski difference.\footnote{Specifically, 
$K-K = \{x-y :\; x,y \in K\}$.}  
  The notion of mean width is closely related to that of the Gaussian complexity, which is widely used in statistical learning theory to measure the size of classes of functions, see \cite{bartlett2003rademacher, LT}.
  An intuitive explanation of the mean width, its basic properties and simple examples are given in Section \ref{sec:mean width}.
The important point is that $w(K)^2$ can serve as the {\em effective dimension} of $K$.

The main example of interest is where $K$ encodes sparsity. 
If $K=K_{n,s}$ is the convex hull of the unit $s$-sparse vectors in $\R^n$, 
the mean width of this set computed in \eqref{eq:sparse mean width} and \eqref{eq:K sparse mean width} is 
\begin{equation}							\label{eq:K mean width sparse intro}
w(K_{n,s}) \sim (s \log(2n/s))^{1/2}.
\end{equation}


\subsection{Main results}

We propose the following solver to estimate the signal $\vx$ from the 1-bit measurements $y_i$. 
It is given by the optimization problem 
\begin{equation}				\label{eq:optimization}
  \max \sum_{i=1}^m y_i \< \va_i, \vx' \> \quad \text{subject to} \quad \vx' \in K.
\end{equation}
This can be described even more compactly as 
\begin{equation}				\label{eq:optimization compact}
\max \< \vy, \mA \vx' \> \quad \text{subject to} \quad \vx' \in K
\end{equation}
where $\mA$ is the $m \times n$ measurement matrix with rows $\va_i$ and $\vy = (y_1,\ldots,y_m)$
is the vector of 1-bit measurements. 

If the set $K$ is convex, \eqref{eq:optimization} is a {\em convex program}, and therefore it can be 
solved in an algorithmically efficient manner. This is the situation we will mostly care about, although our
results below apply for general, non-convex signal sets $K$ as well. 

\begin{theorem}[Fixed signal estimation, random noise]					\label{thm:fixed X}
  Let $\va_1,\ldots,\va_m$ be independent standard Gaussian random vectors in $\R^n$,
  and let $K$ be a subset of the unit Euclidean ball in $\R^n$. 
  Fix $\vx \in K$ satisfying $\twonorm{\vx} = 1$.  
  Assume that the measurements $y_1,\ldots,y_n$ follow the model above.\footnote{Specifically, our assumptions are that 
    $y_i$ are $\{-1,1\}$ valued random variables that are independent given $\{\va_i\}$, and that \eqref{eq:glm} holds 
    with some function $\theta$ satisfying \eqref{eq:lambda}.}  
  Then for each $\b > 0$, with probability at least $1 - 4 \exp(-2 \beta^2)$ the solution $\hat{\vx}$ 
  to the optimization problem \eqref{eq:optimization} 
  satisfies
  $$
  \twonorm{\hat{\vx} - \vx}^2 \leq \frac{8}{\lambda \sqrt{m}}(w(K) + \beta).
  $$
\end{theorem}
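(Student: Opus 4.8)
The plan is to exploit that the program \eqref{eq:optimization compact} maximizes the \emph{linear} functional $\vx'\mapsto\langle\vh,\vx'\rangle$ over $K$, where $\vh:=\frac1m\sum_{i=1}^m y_i\va_i$. First I would record the deterministic consequence of optimality: since $\hat\vx\in K$ maximizes this functional and $\vx\in K$, we have $\langle\vh,\hat\vx\rangle\ge\langle\vh,\vx\rangle$; writing $\vh=\lambda\vx+\vz$ with $\vz:=\vh-\E\vh$ and rearranging gives
$$\lambda\bigl(1-\langle\vx,\hat\vx\rangle\bigr)\le\langle\vz,\hat\vx-\vx\rangle\le\sup_{\vu\in K-K}\langle\vz,\vu\rangle .$$
Combined with the elementary inequality $\twonorm{\hat\vx-\vx}^2\le2\bigl(1-\langle\vx,\hat\vx\rangle\bigr)$, valid because $\twonorm{\hat\vx}\le1$ and $\twonorm{\vx}=1$, this reduces the whole theorem to a bound of the form $\sup_{\vu\in K-K}\langle\vz,\vu\rangle\lesssim m^{-1/2}(w(K)+\beta)$ with the correct constant and probability. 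One must first check $\E\vh=\lambda\vx$: decomposing $\va_i=\langle\va_i,\vx\rangle\vx+\vb_i$ with $\vb_i\perp\vx$ and using $\E[y_i\mid\va_i]=\theta(\langle\va_i,\vx\rangle)$ together with \eqref{eq:lambda} gives $\E[y_i\va_i]=\lambda\vx$ exactly.

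The key step is to identify the law of $\vz$. Writing $g_i=\langle\va_i,\vx\rangle$, so that $y_i$ is a function of $g_i$ and an independent coin while $\vb_i$ is Gaussian on $\vx^\perp$ and independent of $g_i$, I would split
$$m\,\vz=\sum_{i=1}^m\bigl(y_i\va_i-\lambda\vx\bigr)=p\,\vx+\vq,\qquad p:=\sum_{i=1}^m(y_ig_i-\lambda),\quad \vq:=\sum_{i=1}^m y_i\vb_i .$$
Conditioning on the signs $y_i$ (equivalently on the $g_i$), the $\vb_i$ remain i.i.d.\ $\NN(0,P_{\vx^\perp})$ and independent of the conditioning; since $y_i^2=1$, the conditional law of $\vq$ is $\NN(0,mP_{\vx^\perp})$, \emph{independent of the conditioning}. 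Hence $\vq$ is \emph{exactly} Gaussian, $\vq\overset{d}{=}\sqrt m\,\vg^\perp$ with $\vg^\perp$ standard Gaussian on $\vx^\perp$ and independent of the scalar $p$. This is the crucial structural fact: the fluctuation orthogonal to $\vx$ is genuinely Gaussian, so that
$$\E\sup_{\vu\in K-K}\Bigl\langle\tfrac1{\sqrt m}\vg^\perp,\vu\Bigr\rangle\le\frac1{\sqrt m}\,\E\sup_{\vu\in K-K}\langle\vg,\vu\rangle=\frac{w(K)}{\sqrt m}$$
appears with constant $1$ and with no appeal to generic chaining. Because $\vq\perp\vx$, the term $p\,\vx$ only shifts the coefficient of $\vx$ in $\vh$ from $\lambda$ to $\lambda+p/m$, where $p$ is a sum of $m$ i.i.d.\ centered sub-Gaussian terms with $\mathrm{Var}(p)\le m$.

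For the high-probability statement I would condition on the signs and apply Gaussian concentration to $\vg^\perp$. The map $\vg^\perp\mapsto\sup_{\vu\in K-K}\langle\vg^\perp,\vu\rangle$ is Lipschitz with constant $\diam(K-K)\le2$, so it exceeds $w(K)$ by more than $4\beta$ with probability at most $\exp(-2\beta^2)$. Separately, a sub-Gaussian tail bound controls $|p|$, keeping $\lambda+p/m$ bounded below by a fixed fraction of $\lambda$ on the range of $m$ for which the conclusion is not already vacuous (outside that range $\twonorm{\hat\vx-\vx}^2\le4$ trivially). A union bound over these two events yields probability $1-4\exp(-2\beta^2)$, and multiplying the resulting bound on $\sup_{\vu}\langle\vz,\vu\rangle$ by $2/\lambda$ produces $\frac{8}{\lambda\sqrt m}(w(K)+\beta)$ once the factor-$2$ loss from the $\twonorm{\cdot}^2$ inequality and the factor-$2$ loss from the degraded effective correlation $\lambda+p/m$ are accounted for.

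I expect the main obstacle to be organizing the in-direction fluctuation $p$ so that the constant stays exactly $8$: the orthogonal part is handled sharply by the Gaussian identity, but $p$ perturbs the effective signal strength $\lambda\mapsto\lambda+p/m$, and one must verify that this perturbation is absorbed by the slack between the essential constant ($2$) and the stated constant ($8$) precisely where the conclusion is non-trivial. A secondary point requiring care is the correct decoupling of the two randomness sources: the Gaussianity of $\vq$ must be argued conditionally on the signs, so that the dependence of $y_i$ on $\va_i$ does not corrupt the Gaussian concentration step.
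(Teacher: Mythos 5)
Your first two steps are correct and match the paper: the optimality-plus-expectation reduction is exactly the paper's Lemma~\ref{lem:expectation}, and your observation that $\vq=\sum_i y_i\vb_i$ is \emph{exactly} Gaussian (conditionally on $\{(g_i,y_i)\}$, hence unconditionally, and independent of $p$) is a genuinely nice refinement that yields the $w(K)$ term with constant $1$. The gap is in the in-direction scalar $p$, and it is structural, not a matter of constant-chasing. Your plan is a union bound over two events, and once $\beta$ exceeds a constant multiple of $w(K)$ the two requirements become incompatible. The worst-case Lipschitz constant of $\vg^\perp\mapsto\sup_{\vu\in K-K}\<\vg^\perp,\vu\>$ is $2$, so to push its failure probability below $\exp(-2\beta^2)$ you must grant the orthogonal term a deviation of $4\beta$, i.e.\ a contribution $(w(K)+4\beta)/\sqrt m$; this exhausts the entire $\beta$-part of the total budget $4(w(K)+\beta)/\sqrt m$, leaving only $3w(K)/\sqrt m$ for $p$. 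In the variant where you absorb $2|p|/m$ into $\sup_\vu\<\vz,\vu\>$, you would therefore need $\Pr{2|p|/\sqrt m> 3w(K)}\le 2\exp(-2\beta^2)$; but by the CLT that probability tends to a positive constant depending only on $w(K)$ and $\lambda$ (the variance of $p/\sqrt m$ is $1-\lambda^2\ge 1-2/\pi$), so the requirement fails for all large $\beta$. In your preferred ``degraded correlation'' variant, the good event only yields $\twonorm{\hat\vx-\vx}^2\le\frac{2a}{\lambda\sqrt m}(w(K)+4\beta)$ where $\lambda+p/m\ge\lambda/a$; matching $\frac{8}{\lambda\sqrt m}(w(K)+\beta)$ when $\beta\gg w(K)$ forces $a\le 1+O(w(K)/\beta)$, whereas controlling $\Pr{\lambda+p/m<\lambda/a}$ requires $a$ bounded away from $1$ (as $a\downarrow 1$ that probability tends to $\Pr{p<0}\approx 1/2$). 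So no allocation closes the argument when $\beta\gg w(K)$ --- and this is not a fringe regime: it is exactly the regime in which Theorem~\ref{thm:fixed X} is invoked to prove Corollary~\ref{cor:fixed X}, where one takes $\beta\sim\delta\sqrt m\ge\sqrt C\,w(K)$.

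The missing ingredient is a mechanism that controls the two fluctuations \emph{jointly}, in a single concentration step, rather than by a union bound. If $p$ were Gaussian your decomposition would finish sharply: $\vz=(p/m)\vx+\vq/m$ would be a centered Gaussian vector whose covariance is dominated by $m^{-1}$ times the identity, the Slepian/Sudakov--Fernique comparison would give $\E\sup_{\vu\in K-K}\<\vz,\vu\>\le w(K)/\sqrt m$, and one application of Gaussian concentration (joint Lipschitz constant $2/\sqrt m$) would give $\Pr{\sup_{\vu\in K-K}\<\vz,\vu\>\ge 4(w(K)+\beta)/\sqrt m}\le\exp\bigl(-(3w(K)+4\beta)^2/8\bigr)\le\exp(-2\beta^2)$ for \emph{every} $\beta$, which is precisely the theorem. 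But $p=\sum_i(y_ig_i-\lambda)$ is only sub-Gaussian, with a law depending on the unknown $\theta$. The paper's symmetrization step (Lemma~\ref{lem:symmetrization}) is exactly the substitute for this missing Gaussianity: multiplying by independent Rademacher signs makes $\e_i y_i\va_i\stackrel{dist}{=}\va_i$, so the full deviation --- in-direction component included --- is dominated by the single Gaussian vector $m^{-1/2}\vg$, and Gaussian concentration is applied once, with no splitting and no union bound. If you retain your decomposition, you can still prove a version of the theorem with unspecified absolute constants in place of $8$ and $2$ (restrict to the non-vacuous range of $\beta$ and run your degraded-correlation argument with $\lambda+p/m\ge\lambda/2$), but not the statement in the form given.
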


As an immediate consequence, we see that 
{\em the signal $\vx \in K$ can be effectively estimated from $m = O(w(K)^2)$ one-bit noisy measurements}. 
The following result makes this statement precise.

\begin{corollary}[Number of measurements]					\label{cor:fixed X}
  Let $\d > 0$ and suppose that
  $$
  m \geq C \d^{-2} w(K)^2.
  $$
  Then, under the assumptions of Theorem \ref{thm:fixed X}, with probability at least $1 - 8 \exp(-c \delta^2 m)$
  the solution $\hat{\vx}$ to the optimization problem \eqref{eq:optimization} satisfies
  $$
  \twonorm{\hat{\vx} - \vx}^2 \leq \delta/\lambda.
  $$
\end{corollary}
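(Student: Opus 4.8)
The plan is to deduce the corollary directly from Theorem~\ref{thm:fixed X} by choosing the free parameter $\beta$ appropriately in terms of $\delta$ and $m$. The key observation is that the conclusion of the theorem, $\twonorm{\hat{\vx}-\vx}^2 \leq \frac{8}{\lambda\sqrt{m}}(w(K)+\beta)$, meets the target bound $\twonorm{\hat{\vx}-\vx}^2 \leq \delta/\lambda$ precisely when $\frac{8}{\sqrt{m}}(w(K)+\beta) \leq \delta$, that is, when $w(K)+\beta \leq \delta\sqrt{m}/8$. So the whole task reduces to splitting this ``budget'' $\delta\sqrt{m}/8$ between the mean-width term $w(K)$ and the confidence parameter $\beta$.

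First I would use the hypothesis $m \geq C\delta^{-2}w(K)^2$ to control the mean-width term: rearranging gives $w(K) \leq \delta\sqrt{m}/\sqrt{C}$, so by taking $C$ a sufficiently large absolute constant (e.g.\ $C = 256$) we can ensure $w(K) \leq \delta\sqrt{m}/16$, using up only half the budget. Second, I would set $\beta := \delta\sqrt{m}/16$, which absorbs the remaining half, so that $w(K)+\beta \leq \delta\sqrt{m}/8$ as required. Plugging this value of $\beta$ into Theorem~\ref{thm:fixed X} then yields the bound $\twonorm{\hat{\vx}-\vx}^2 \leq \delta/\lambda$ on the relevant event.

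It remains to translate the probability. Theorem~\ref{thm:fixed X} guarantees the above with probability at least $1 - 4\exp(-2\beta^2)$, and with the choice $\beta = \delta\sqrt{m}/16$ we have $2\beta^2 = \delta^2 m/128$; hence the failure probability is at most $4\exp(-\delta^2 m/128) \leq 8\exp(-c\delta^2 m)$ upon setting $c := 1/128$. This matches the claimed bound.

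Since every step is a routine substitution, I do not expect any genuine obstacle; the only care needed is the bookkeeping of the absolute constants $C$ and $c$, which must be chosen consistently so that the mean-width term fits inside half the budget while the exponent in the probability comes out in the stated form. If one wanted sharper constants, the split of the budget between $w(K)$ and $\beta$ could be optimized, but any fixed split suffices for the qualitative statement.
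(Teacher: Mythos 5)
Your proposal is correct and is essentially the paper's own argument: the paper presents Corollary~\ref{cor:fixed X} as an ``immediate consequence'' of Theorem~\ref{thm:fixed X}, obtained exactly by choosing $\beta$ proportional to $\delta\sqrt{m}$ and absorbing $w(K)$ via the hypothesis $m \geq C\delta^{-2}w(K)^2$, just as you do. Your constant bookkeeping ($C=256$, $\beta = \delta\sqrt{m}/16$, $c=1/128$) is consistent and the probability bound follows as you state.
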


Here and in the rest of the paper, $C$ and $c$ denote positive absolute constants whose values may change from instance to instance.

\medskip

Theorem~\ref{thm:fixed X} is concerned with an arbitrary but fixed signal $\vx \in K$, and with a stochastic model on the noise in the measurements. 
We will show how to strengthen these results to cover all signals $\vx \in K$ uniformly, and to allow for a worst-case (adversarial) noise. 
Such noise can be modeled as flipping some fixed percentage of arbitrarily chosen bits, 
and it can be measured using the
Hamming distance $d_H(\tilde{\vy}, \vy) = \sum_{i=1}^m \one_{\{\tilde{y_i} \ne y_i\}}$ between $\tilde{\vy}, \vy \in \{-1,1\}^m$.

We present the following theorem in the same style as Corollary \ref{cor:fixed X}.
\begin{theorem}[Uniform estimation, adversarial noise]					\label{thm:uniform}
  Let $\va_1,\ldots,\va_m$ be independent standard Gaussian random vectors in $\R^n$,
  and let $K$ be a subset of the unit Euclidean ball in $\R^n$. 
  Let $\d > 0$ and suppose that
  \begin{equation}				\label{eq:uniform requirement}
    m \geq C \delta^{-6} w(K)^2.
  \end{equation}
  Then with probability at least $1 - 8 \exp(-c \delta^2 m)$, the following event occurs. 
  Consider a signal $\vx \in K$ satisfying $\|x\|_2 =1$
  and its (unknown) uncorrupted $1$-bit measurements $\tilde{\vy} = (\tilde{y}_1,\ldots,\tilde{y}_m)$ given as 
  $$
  \tilde{y}_i = \sign(\< \va_i, \vx \> ), \quad i = 1, 2, \hdots, m.
  $$
  Let $\vy = (y_1,\ldots,y_m) \in \{-1,1\}^m$ be any (corrupted) measurements satisfying $d_H(\tilde{\vy}, \vy) \leq \tau m$.  
  Then the solution $\hat{\vx}$ to the optimization problem \eqref{eq:optimization} with input $\vy$ satisfies
  \begin{equation} \label{eq:uniform error bound}
  \twonorm{\hat{\vx} - \vx}^2  \leq \delta \sqrt{\log(e/\delta)} + 11\tau \sqrt{\log(e/\tau)}.
  \end{equation}
\end{theorem}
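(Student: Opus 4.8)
The plan is to combine optimality of $\hat{\vx}$ with a geometric inequality and then to control two error terms uniformly. Optimality of $\hat{\vx}$ for input $\vy$ gives $\frac1m\langle\vy,\mA(\hat{\vx}-\vx)\rangle\ge 0$. Splitting $\vy=\tilde{\vy}+(\vy-\tilde{\vy})$, using that $\E[\sign(\langle\va,\vx\rangle)\va]=\lambda\vx$ with $\lambda=\sqrt{2/\pi}=\E|g|$ (the value of \eqref{eq:lambda} for $\theta=\sign$), and the elementary bound $\langle\vx,\hat{\vx}-\vx\rangle\le-\tfrac12\twonorm{\hat{\vx}-\vx}^2$ (valid since $\twonorm{\hat{\vx}}\le1=\twonorm{\vx}$), I would reduce the theorem to the inequality
\[
\tfrac{\lambda}{2}\twonorm{\hat{\vx}-\vx}^2 \le F(\vx,\hat{\vx}-\vx) + \tfrac1m\langle\vy-\tilde{\vy},\mA(\hat{\vx}-\vx)\rangle,
\]
where $F(\vx,\vu)=\frac1m\sum_{i=1}^m\sign(\langle\va_i,\vx\rangle)\langle\va_i,\vu\rangle-\lambda\langle\vx,\vu\rangle$ is the centered empirical process. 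It then suffices to bound the fluctuation $F$ uniformly over $\vx\in K\cap S^{n-1}$ and $\vu\in K-K$, and the adversarial term uniformly over all corruptions with $d_H(\tilde{\vy},\vy)\le\tau m$.

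The central estimate I would isolate is a uniform bound on the sum of the largest coordinates of $\mA\vu$: with high probability, simultaneously over all $\vu\in K-K$ and all thresholds $t>0$,
\[
\tfrac1m\max_{|S|\le\rho m}\sum_{i\in S}|\langle\va_i,\vu\rangle| \ \le\ \rho\,t + \tfrac1m\sum_{i=1}^m(|\langle\va_i,\vu\rangle|-t)_+ .
\]
Since $z\mapsto(|z|-t)_+$ is $1$-Lipschitz, the empirical average concentrates around $\E(|g|\twonorm{\vu}-t)_+$ uniformly in $\vu$ by Gaussian concentration and the contraction principle, the uniformity costing $w(K)/\sqrt m$, which is $\le\delta$ under \eqref{eq:uniform requirement}. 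Optimizing over $t\sim\sqrt{\log(1/\rho)}$ yields the rate $\rho\sqrt{\log(e/\rho)}$. Applying this with $\rho=\tau$ controls the adversarial term, because $\vy-\tilde{\vy}$ is supported on at most $\tau m$ coordinates with entries in $\{-2,0,2\}$; this is the source of the $11\tau\sqrt{\log(e/\tau)}$ summand.

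For the fluctuation $F$ the obstacle is that $\sign(\langle\va_i,\vx\rangle)$ is discontinuous in $\vx$, so a direct chaining over $\vx$ is unavailable. I would pass to an $\epsilon$-net $\NN$ of $K\cap S^{n-1}$ with $\epsilon\sim\delta$. At a fixed net point $\vx_0$ the summands in $F(\vx_0,\cdot)$ are independent, mean zero, and sub-Gaussian in $\vu$, so the machinery behind Theorem~\ref{thm:fixed X} gives $\sup_{\vu\in K-K}F(\vx_0,\vu)\lesssim(w(K)+\beta)/\sqrt m$ with probability $1-4\exp(-2\beta^2)$; a union bound over $\NN$, absorbing $\log|\NN|$ into $\beta^2$, controls all net points at once. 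To pass from $\vx_0$ to a nearby $\vx$ I would bound the number of indices where $\sign(\langle\va_i,\vx\rangle)\ne\sign(\langle\va_i,\vx_0\rangle)$: each such flip forces $|\langle\va_i,\vx_0\rangle|\le|\langle\va_i,\vx-\vx_0\rangle|$, an event of probability $\sim\epsilon$, so at most $\sim\epsilon m$ signs flip uniformly. The induced change in $F$ is then at most $\frac2m\max_{|S|\le\epsilon m}\sum_{i\in S}|\langle\va_i,\vu\rangle|$, which is precisely the sub-lemma with $\rho=\epsilon\sim\delta$, contributing the $\delta\sqrt{\log(e/\delta)}$ term.

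The main obstacle is exactly this discretization of the discontinuous sign map: controlling, uniformly over the net and over the direction $\vu$, both the count and the influence of sign flips. It is this step, rather than the benign fixed-point concentration, that forces the extra logarithmic factor $\sqrt{\log(e/\delta)}$ and the worse power $\delta^{-6}$ in \eqref{eq:uniform requirement}, since one must take $\epsilon$ polynomially small in $\delta$ while keeping $\NN$ small enough for the union bound and the flip-count estimate to hold at scale $\delta$. Collecting the three contributions into the reduced inequality, bounding $\twonorm{\hat{\vx}-\vx}\le2$ where the adversarial and flip terms are not homogenized, dividing by $\lambda/2=\sqrt{1/(2\pi)}$, and tracking constants yields $\twonorm{\hat{\vx}-\vx}^2\le\delta\sqrt{\log(e/\delta)}+11\tau\sqrt{\log(e/\tau)}$ on an event of probability $1-8\exp(-c\delta^2 m)$, as claimed.
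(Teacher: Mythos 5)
Your proposal follows the same architecture as the paper's proof: reduce, via optimality of $\hat{\vx}$ and the identity $\E \tilde{f}_{\vx}(\vz) = \sqrt{2/\pi}\,\< \vx, \vz\> $, to a deviation bound that is uniform over $\vx \in K \cap S^{n-1}$, $\vz \in K-K$ and corruptions (the paper's Proposition~\ref{prop:uniform concentration}); discretize $\vx$ by a $\delta$-net from Sudakov minoration; handle net points by the fixed-signal concentration plus a union bound; and handle both the net-to-point passage and the $\tau m$ adversarial flips by bounding $\sum_{i\in S} |\< \va_i, \vz\> |$ over small index sets $S$. Your truncation argument for that last quantity, $\sum_{i \in S}|\< \va_i,\vz\> | \le |S|\,t + \sum_i (|\< \va_i,\vz\> |-t)_+$ with contraction and Gaussian concentration, is a legitimate alternative to the paper's route (a union bound over the ${m \choose s}$ subsets combined with the $\ell_1$-embedding Lemma~\ref{lem:l1 embedding}) and yields the same $\rho\sqrt{\log(e/\rho)}$ rate.

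The genuine gap is at the crux: the uniform bound on the \emph{number} of sign flips. You argue that for $\twonorm{\vx - \vx_0}\le \epsilon$ each flip forces $|\< \va_i,\vx_0\> | \le |\< \va_i, \vx - \vx_0\> |$, an event of probability $\sim\epsilon$, ``so at most $\sim \epsilon m$ signs flip uniformly.'' For a \emph{fixed} pair $(\vx,\vx_0)$ this is binomial concentration; but the bound must hold simultaneously for all $\vx$ in the $\epsilon$-ball around each net point, an uncountable family, and the flip set $T(\vx,\vx_0)$ depends discontinuously on $\vx$, so neither a union bound nor standard chaining applies. A worst-case bound over the ball is useless: uniformly over that ball a flip at $i$ can occur whenever $|\< \va_i,\vx_0\> | \le \epsilon \twonorm{\va_i} \sim \epsilon\sqrt{n}$, an event of probability $\sim \min(1,\epsilon\sqrt{n})$, not $\sim\epsilon$; one must exploit that $\vx - \vx_0$ ranges over the low-complexity set $K-K$. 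This uniform flip-count estimate is exactly part~\ref{part:sign patterns} of the paper's Lemma~\ref{lem:helpful events}, and it is precisely the step the paper does \emph{not} prove by elementary means: it is imported from the random hyperplane tessellation theorem of \cite{pv-embeddings} (Theorem~\ref{thm:tessellations}), which is the deep ingredient behind Theorem~\ref{thm:uniform} and the source of the $\delta^{-6}$ in \eqref{eq:uniform requirement}. Your proposal asserts this step rather than proves it; as written, the argument does not go through without invoking (or reproving) that tessellation result, and you correctly identify this discretization as the main obstacle without supplying the idea that overcomes it.
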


\medskip

This uniform result will follow from a deeper analysis than the fixed-signal result, Theorem~\ref{thm:fixed X}.
Its proof will be based on the recent results from \cite{pv-embeddings} on random hyperplane tessellations of $K$.

\begin{remark}[Sparse estimation]
  A remarkable example is for $s$-sparse signals in $\R^n$. 
  Recalling the mean width estimate \eqref{eq:K mean width sparse intro}, we see that our results 
  above imply that 
  {\em an $s$-sparse signal in $\R^n$ can be effectively estimated from $m = O(s\log(2n/s))$ one-bit noisy measurements}. 
  We will make this statement precise in Corollary~\ref{cor:sparse} and the remark after it.
\end{remark}

\begin{remark}[Hamming cube encoding and decoding]
  Let us put Theorem~\ref{thm:uniform} in the context of coding in information theory.
  In the earlier paper \cite{pv-embeddings} we proved that $K \cap S^{n-1}$ 
  can be almost isometrically embedded into the Hamming cube $\{-1,1\}^m$, 
  with the same $m$ and same probability bound as in Theorem~\ref{thm:uniform}.
  Specifically, one has
  \begin{equation}				\label{eq:embedding}
  \abs{\frac{1}{\pi} d_G(\vx, \vx') - \frac{1}{m} d_H \big(\sign(\mA \vx), \sign(\mA \vx')\big)} \leq \delta
  \end{equation}
  for all   $\vx, \vx' \in K\cap S^{n-1}$.  Above, $d_G$ and $d_H$ denote the geodesic distance in $S^{n-1}$ and the Hamming distance in $\{-1,1\}^m$ respectively,
  see Theorem~\ref{thm:tessellations} below. 
  Thus the embedding $K\cap S^{n-1} \to \{-1,1\}^m$ is given by the map\footnote{The sign function is applied to each coordinate of $\mA \vx$.} 
  $$
  \vx \mapsto \sign(\mA \vx).
  $$
  This map {\em encodes} a given signal $\vx \in K$ into a binary string $\vy = \sign(\mA \vx)$.
  Conversely, one can accurately and robustly {\em decode} $\vx$ from $\vy$ by solving 
  the optimization problem \eqref{eq:optimization}. This is the content of Theorem~\ref{thm:uniform}.
\end{remark}

\begin{remark}[Optimality]
  While the dependence of $m$ on the mean width $w(K)$ in the results above seems to be optimal 
  (see \cite{pv-embeddings} for a discussion), the dependence on the accuracy $\d$ in Theorem \ref{thm:uniform}
  is most likely not optimal. 
  We are not trying to optimize dependence on $\delta$ in this paper, but are leaving this as an open problem.  Nevertheless, in some cases, the dependence on $\delta$ in Theorem \ref{thm:fixed X} is optimal; see Section \ref{sec:1-bit CS} below.
\end{remark}

Theorem~\ref{thm:uniform} can be extended to allow for a random noise together with adversarial noise;
this is discussed in Remark~\ref{rem:random noise} below.

\subsection{Organization}
An intuitive discusison of the mean width along with the estimate \eqref{eq:K mean width sparse intro} 
of the mean width when $K$ encodes sparse vectors is given in Section~\ref{sec:mean width}.  
In Section~\ref{sec:applications} we specialize our results to a variety of (approximately) sparse signal models---1-bit compressed sensing, sparse logistic regression and low-rank matrix recovery.  
In Subsection \ref{sec:covariance}, we extend our results to allow for correlations in the entries of the measurement vectors.  

The proofs of our main results, Theorems \ref{thm:fixed X} and \ref{thm:uniform}, 
are given in Sections~\ref{sec:deducing}---\ref{sec:uniform concentration}.
In Section~\ref{sec:deducing} we quickly reduce these results to the two concentration inequalities that 
hold uniformly over the set $K$---Propositions~\ref{prop:concentration} and \ref{prop:uniform concentration} respectively.
Proposition~\ref{prop:concentration} is proved in Section~\ref{sec:concentration} using standard techniques
of probability in Banach spaces. The proof of Proposition~\ref{prop:uniform concentration} is deeper; it is based
on the recent work of the authors \cite{pv-embeddings} on random hyperplane tessellations. 
The argument is given in Section~\ref{sec:uniform concentration}.

\subsection{Notation} \label{sec:notation}
We write $a \sim b$ if $ca \le b \le Ca$ for some positive absolute constants $c,C$ ($a$ and $b$ may have dimensional dependence).
In order to increase clarity, vectors are written in lower case bold italics (e.g., $\vg$), and matrices are upper case bold italics (e.g, $\mA$). We let $\vg$ denote a standard Gaussian random vector whose length will be clear from context; 
$g$ denotes a standard normal random variable.  $C, c$ will denote positive absolute constants whose values may change from instance to instance. 
Given a vector $\vv$ in $\R^n$ and a subset $T \subset \{1,\ldots,n\}$, we denote by $\vv_T \in \R^T$ the restriction of $\vv$ onto the coordinates in $T$.

$B_2^n$ and $S^{n-1}$ denote the unit Euclidean ball and sphere in $\R^n$ respectively, 
and $B_1^n$ denotes the unit ball with respect to $\ell_1$ norm.
The Euclidean and $\ell_1$ norms of a vector $\vv$ are denoted $\twonorm{\vv}$ and $\onenorm{\vv}$ respectively. 
The number of non-zero entries of $\v$ is denoted $\zeronorm{\vv}$.
The operator norm (the largest singular value) of a matrix $\mA$ is denoted $\opnorm{\mA}$.

\section{Mean width and sparsity} \label{sec:mean width}

\subsection{Mean width}

In this section we explain the geometric meaning of the mean width of a set $K \subset \R^n$ which was defined by 
the formula \eqref{eq:mean width}, and discuss its basic properties and examples.

The notion of mean width plays a significant role in asymptotic convex geometry (see e.g.~\cite{gm-geometry}).
The width of $K$ in the direction of $\veta \in S^{n-1}$ is the smallest width of the slab between two parallel hyperplanes with normals $\veta$ 
that contains $K$. Analytically, the width can be expressed as 
$$
\sup_{\vu \in K} \< \veta, \vu\> - \inf_{\vv \in K} \< \veta, \vv\> 
= \sup_{\vx \in K-K} \< \veta, \vx\> ,
$$ 
see Figure~\ref{fig:mean width}.
Averaging over $\veta$ uniformly distributed in $S^{n-1}$, we obtain the {\em spherical mean width}:
$$
\tilde{w}(K) := \E \sup_{\vx \in K-K} \< \veta, \vx\> .
$$

\begin{figure}[htp]		
  \centering \includegraphics[height=2.6cm]{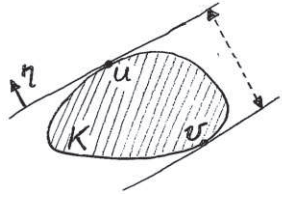} 
  \caption{Width of a set $K$ in the direction of $\veta$ is illustrated by the dashed line.}
  \label{fig:mean width}
\end{figure}

Instead of averaging using $\veta \in S^{n-1}$, it is often more convenient to use a standard Gaussian 
random vector $\vg \in \R^n$. This gives the definition \eqref{eq:mean width} of the {\em Gaussian mean width} of $K$: 
$$
w(K) := \E \sup_{\vx \in K-K} \< \vg, \vx \> .
$$
In this paper we shall use the Gaussian mean width, which we call the ``mean width'' for brevity.
Note that the spherical and Gaussian versions of mean width are proportional to each other. 
Indeed, by rotation invariance we can realize $\veta$ as $\veta = \vg/\|\vg\|_2$ and note that $\veta$ is independent 
of the magnitude factor $\|\vg\|_2$. It follows that $w(K) = \E \|\vg\|_2 \cdot \tilde{w}(K)$. Further, once can use 
that $\E \|g\|_2 \sim \sqrt{n}$ and obtain the useful comparison of Gaussian and spherical versions of mean width:
$$
w(K) \sim \sqrt{n} \cdot \tilde{w}(K).
$$

Let us record some further simple but useful properties of the mean width. 

\begin{proposition}[Mean width]				\label{prop:mean width}
  The mean width of a subset $K \subset \R^n$ has the following properties.
  \begin{enumerate}[1.]
    \item \label{part:affine} The mean width is invariant under orthogonal transformations and translations. 
    \item \label{part:convex hull} The mean width is invariant under taking the convex hull, i.e. $w(K) = w(\conv(K))$. 
    \item \label{part:abs} We have 
      $$
      w(K) = \E \sup_{\vx \in K-K} |\< \vg, \vx \> |.
      $$
    \item \label{part:diam}Denoting the diameter of $K$ in the Euclidean metric by $\diam(K)$, we have 
      $$
      \sqrt{\frac{2}{\pi}} \diam(K) \le w(K) \le n^{1/2} \diam(K).
      $$
    \item \label{part:mean width upper}We have 
      $$
      w(K) \le 2 \E \sup_{\vx \in K} \< \vg, \vx \> 
      \le 2 \E \sup_{\vx \in K} |\< \vg, \vx \> |.
      $$
      For an origin-symmetric set $K$, both these inequalities become equalities.
    \item \label{part:mean width lower} The inequalities in part \ref{part:mean width upper} can be essentially reversed for arbitrary $K$:
     $$
     w(K) \ge \E \sup_{\vx \in K} |\< \vg, \vx \> | - \sqrt{\frac{2}{\pi}} \dist(0,K).
     $$
     Here $\dist(0,K) = \inf_{x \in K} \|x\|_2$ is the Euclidean distance from the origin to $K$.
     In particular, if $0 \in K$ then one has $w(K) \ge \E \sup_{\vx \in K} |\< \vg, \vx \> |$.
  \end{enumerate}
\end{proposition}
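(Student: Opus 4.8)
The plan is to derive all six parts from a single pointwise identity together with the rotational symmetry of the Gaussian vector $\vg$. The identity is that for each fixed realization of $\vg$,
\[
\sup_{\vx \in K-K} \< \vg, \vx \> = \sup_{\vu \in K} \< \vg, \vu \> - \inf_{\vv \in K} \< \vg, \vv \>,
\]
which holds because maximizing $\< \vg, \vu - \vv \>$ over $K-K$ amounts to choosing $\vu$ and $\vv$ independently. I will also repeatedly use two elementary facts: a linear functional attains the same supremum over a set and over its convex hull, and $-\vg$ has the same distribution as $\vg$.

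For part \ref{part:affine}, orthogonal invariance follows by writing $w(UK)=\E\sup_{\vy\in K-K}\< U^{T}\vg,\vy\>$ and substituting $U^{T}\vg$, which has the same law as $\vg$; translation invariance is immediate since $(K+\vt)-(K+\vt)=K-K$. Part \ref{part:convex hull} follows from the convex-hull remark applied to the supremum over $K-K$, together with $\conv(K)-\conv(K)=\conv(K-K)$. Part \ref{part:abs} is the observation that $K-K$ is origin-symmetric, so the supremum of $\< \vg, \cdot \>$ over it already equals the supremum of $\abs{\< \vg, \cdot \>}$.

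For part \ref{part:diam}, the lower bound comes from choosing $\vx_0 \in K-K$ with $\twonorm{\vx_0}$ arbitrarily close to $\diam(K)$ and noting, via part \ref{part:abs}, that $w(K) \ge \E \abs{\< \vg, \vx_0 \>} = \twonorm{\vx_0}\sqrt{2/\pi}$, using $\E\abs{\< \vg, \vu \>}=\sqrt{2/\pi}$ for a unit vector $\vu$; the upper bound comes from $\< \vg, \vx \> \le \twonorm{\vg}\,\diam(K)$ on $K-K$ together with $\E\twonorm{\vg}\le\sqrt{n}$. For part \ref{part:mean width upper}, taking expectations in the master identity and using $\E\inf_{\vv}\< \vg,\vv\> = -\E\sup_{\vv}\< -\vg,\vv\> = -\E\sup_{\vv}\< \vg,\vv\>$ (by symmetry of $\vg$) yields in fact the equality $w(K)=2\,\E\sup_{\vx\in K}\< \vg,\vx\>$; the stated inequality is then immediate, the second inequality is trivial, and for origin-symmetric $K$ both become equalities exactly as in part \ref{part:abs}.

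The main work is part \ref{part:mean width lower}. Writing $M(\vg):=\sup_{\vx\in K}\< \vg,\vx\>$, I will use $\sup_{\vx\in K}\abs{\< \vg,\vx\>}=\max\big(M(\vg),M(-\vg)\big)$ and the relation $\max=(\text{sum})-\min$, so that after taking expectations and invoking part \ref{part:mean width upper},
\[
w(K) = \E\sup_{\vx\in K}\abs{\< \vg,\vx\>} + \E\min\big(M(\vg),M(-\vg)\big).
\]
It then remains to bound the last term from below. Letting $\vx_0\in K$ realize $\dist(0,K)$, the inequalities $M(\vg)\ge \< \vg,\vx_0\>$ and $M(-\vg)\ge -\< \vg,\vx_0\>$ give $\min\big(M(\vg),M(-\vg)\big)\ge -\abs{\< \vg,\vx_0\>}$, whence $\E\min(\cdots)\ge -\sqrt{2/\pi}\,\dist(0,K)$. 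This is the step demanding the most care, since one must exhibit a single point controlling both $M(\vg)$ and $M(-\vg)$ simultaneously; the point of $K$ nearest the origin does exactly this. The final clause follows by specializing to $\dist(0,K)=0$ when $0\in K$.
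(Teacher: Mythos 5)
Your proof is correct, and while parts \ref{part:affine}--\ref{part:diam} are handled essentially as in the paper (same substitutions, same use of $\E|\langle \vg, \vx_0\rangle| = \sqrt{2/\pi}\,\twonorm{\vx_0}$ and $\E\twonorm{\vg}\le\sqrt{n}$), your treatment of parts \ref{part:mean width upper} and \ref{part:mean width lower} takes a genuinely different route. The paper dismisses part \ref{part:mean width upper} as obvious and proves part \ref{part:mean width lower} geometrically: first the special case $0\in K$, where the inclusion $K-K \supset K\cup(-K)$ gives $w(K)\ge \E\sup_{\vx\in K}|\langle \vg,\vx\rangle|$, and then the general case by translation invariance, $w(K) = w(K-\vx_0) \ge \E\sup_{\vx\in K}|\langle \vg,\vx-\vx_0\rangle| \ge \E\sup_{\vx\in K}|\langle \vg,\vx\rangle| - \E|\langle \vg,\vx_0\rangle|$, followed by minimizing over $\vx_0\in K$. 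You instead push the master identity $\sup_{K-K}\langle \vg,\cdot\rangle = \sup_K\langle \vg,\cdot\rangle - \inf_K\langle \vg,\cdot\rangle$ through the expectation and use that $-\vg$ has the same law as $\vg$, obtaining the exact identity $w(K) = 2\,\E\sup_{\vx\in K}\langle \vg,\vx\rangle$ for \emph{arbitrary} $K$ --- a sharper fact than part \ref{part:mean width upper} as stated, since it shows the first inequality there is always an equality --- and then derive part \ref{part:mean width lower} from the algebra $\max = \mathrm{sum} - \min$ applied to $M(\vg)$ and $M(-\vg)$, bounding $\E\min\bigl(M(\vg),M(-\vg)\bigr) \ge -\sqrt{2/\pi}\,\dist(0,K)$ via the point of $K$ nearest the origin; this single-point control of both $M(\vg)$ and $M(-\vg)$ is exactly the right observation and is valid. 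What your route buys is a unified derivation of everything from one identity plus the stronger form of part \ref{part:mean width upper}; what the paper's buys is a more directly geometric argument (an inclusion plus translation) that avoids the max/min bookkeeping. One cosmetic caveat: $\dist(0,K)$ need not be attained when $K$ is not closed, so you should either take near-minimizers $\vx_0$ or observe that your bound $\E\min\bigl(M(\vg),M(-\vg)\bigr) \ge -\sqrt{2/\pi}\,\twonorm{\vx_0}$ holds for every $\vx_0\in K$ and pass to the infimum --- which is precisely how the paper phrases its final step.
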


\begin{proof}
Parts \ref{part:affine}, \ref{part:convex hull} and \ref{part:mean width upper} are obvious by definition; part \ref{part:abs}
follows by the symmetry of $K-K$. 

To prove part \ref{part:diam}, note that for every $\vx_0 \in K-K$ one has
\begin{equation}							\label{eq:abs moment}
w(K) \ge \E |\< \vg, \vx_0\> | = \sqrt{\frac{2}{\pi}} \, \|\vx_0\|_2.
\end{equation}
The equality here follows because $\< \vg, \vx_0\> $ is a normal random variable with variance $\|\vx_0\|_2$.
This yields the lower bound in part \ref{part:diam}. For the upper bound, we can use part \ref{part:abs}
along with the bound $|\< \vg, \vx \> | \le \|\vg\|_2 \|\vx\|_2 \le \|\vg\|_2 \cdot \diam(K)$ for all $\vx \in K-K$. 
This gives
\begin{align*}
w(K) \le \E \|\vg\|_2 \cdot \diam(K) 
&\le (\E \|\vg\|_2^2)^{1/2} \diam(K) \\
&= n^{1/2} \diam(K).
\end{align*}

To prove part \ref{part:mean width lower}, let us start with the special case where $0 \in K$. 
Then $K-K \supset K \cup (-K)$, thus
$w(K) \ge \E \sup_{\vx \in K \cup (-K)} \< \vg, \vx \>  = \E \sup_{\vx \in K} |\< \vg, \vx \> |$ as claimed.
Next, consider a general case. Fix $\vx_0 \in K$ and apply the previous reasoning for the set $K-\vx_0 \ni 0$.
Using parts \ref{part:affine} and \ref{part:abs} we obtain
\begin{align*}
w(K) = w(K-\vx_0) &\geq \E \sup_{\vx \in K} \abs{\< \vg, \vx - \vx_0\> }\\
&\ge \E \sup_{\vx \in K} \abs{\< \vg, \vx \> } - \E \abs{\< \vg, \vx_0\> }.
\end{align*}
Finally, as in \eqref{eq:abs moment} we note that 
$\E |\< \vg, \vx_0\> | = \sqrt{\frac{2}{\pi}} \, \|\vx_0\|_2$.
Minimizing over $\vx_0 \in K$ completes the proof.
\end{proof}

\begin{example} For illustration, let us evaluate the mean width of some sets $K \subseteq B_2^n$.
  \begin{enumerate}[1.]
    \item If $K = B_2^n$ or $K = S^{n-1}$ then $w(K) = \E \|\vg\|_2 \le (\E \|\vg\|_2^2)^{1/2} = \sqrt{n}$ (and in fact $w(K) \sim \sqrt{n}$). 
    \item If the linear algebraic dimension $\dim(K) = k$ then $w(K) \le \sqrt{k}$.
    \item If $K$ is a finite set, then $w(K) \le C\sqrt{\log|K|}$. 
  \end{enumerate}
\end{example}

\smallskip

\begin{remark}[Effective dimension]
  The square of the mean width, $w(K)^2$, may be interpreted as the {\em effective dimension} 
  of a set $K \subseteq B_2^n$.
  It is always bounded by the linear algebraic dimension (see the example above), but it has the advantage of robustness---a small perturbation of $K$ leads to a small change in $w(K)^2$. 
\end{remark}

In this light, the invariance of the mean width under taking the convex hull (Proposition~\ref{prop:mean width}, part \ref{part:convex hull}) 
is especially useful in compressed sensing, where a usual tactic is to relax the non-convex program to a convex program. 
It is important that in the course of this relaxation, the ``effective dimension'' of the signal set $K$ remains the same.

\smallskip

Mean width of a given set $K$ can be computed using several tools from probability in Banach spaces.  
These include Dudley's inequality,  Sudakov minoration, the Gaussian concentration inequality, Slepian's inequality and the sharp technique of majorizing measures and generic chaining \cite{LT, generic-chaining}.

\subsection{Sparse signal set}
The quintessential signal structure considered in this paper is sparsity.  Thus for given $n \in \N$ and $0 < s \le n$, 
we consider the set 
\[S_{n,s} = \{\vx \in \R^n:\; \zeronorm{\vx} \leq s, \; \twonorm{\vx} \leq 1\}.\]
In words, $S_{n,s}$ consists of $s$-sparse (or sparser) vectors with length $n$ whose Euclidean norm is bounded by 1.

Although the linear algebraic dimension of $S_{n,s}$ is $n$ (as this set spans $\R^n$),
the dimension of $S_{n,s} \cap \{\vx \in \R^n: \zeronorm{\vx} = s\}$ as a manifold with boundary 
embedded in $\R^n$ is $s$.\footnote{Thus $S_{n,s}$ is the union of $s + 1$ manifolds with boundary 
  each of whose dimension is bounded by $s$.}
It turns out that the ``effective dimension'' of $S_{n,s}$ given by the square of its mean width 
is much closer to the manifold dimension $s$ than to the linear algebraic dimension $n$:

\begin{lemma}[Mean width of the sparse signal set]		\label{lem:sparse mean width}
  We have
  \begin{equation}							\label{eq:sparse mean width}
  c s \log(2n/s) \leq w^2(S_{n,s}) \leq C s \log(2n/s).
  \end{equation}
\end{lemma}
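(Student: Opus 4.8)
The plan is to prove the two bounds on $w^2(S_{n,s})$ separately, and to convert everything to an estimate on $\E \sup_{\vx \in S_{n,s}} \<\vg, \vx\>$, which by Proposition~\ref{prop:mean width} (part~\ref{part:mean width lower}, using $0 \in S_{n,s}$) equals $w(S_{n,s})$ since the set is origin-symmetric. So throughout I would work with $W := \E \sup_{\vx \in S_{n,s}} \<\vg, \vx\>$ and show $c\,(s\log(2n/s))^{1/2} \le W \le C\,(s\log(2n/s))^{1/2}$, then square. The key combinatorial identity is that $\sup_{\vx \in S_{n,s}} \<\vg, \vx\> = \max_{|T| = s} \big(\sum_{i \in T} g_i^2\big)^{1/2}$, because for a fixed support $T$ the best unit vector aligns with $\vg_T$ and contributes $\|\vg_T\|_2$. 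Thus $W = \E \max_{|T|=s} \|\vg_T\|_2$, which is the expected largest $\ell_2$ norm of an $s$-subvector of a Gaussian vector — equivalently, $\big(\sum_{i=1}^{s} (g^*_i)^2\big)^{1/2}$ where $g^*_1 \ge \cdots \ge g^*_n$ are the ordered magnitudes $|g_i|$.

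\medskip
\noindent\textbf{Upper bound.} First I would control the expected sum of the $s$ largest squared coordinates. A clean route is a union bound / Dudley-type estimate: for each fixed $T$ with $|T| = s$, $\|\vg_T\|_2$ concentrates around $\sqrt{s}$ with subgaussian tails, and there are $\binom{n}{s} \le (en/s)^s$ choices of $T$, so $\log$ of the cardinality is $\sim s\log(2n/s)$. Taking a maximum over a class of this size inflates the typical value by roughly $\sqrt{\log \binom{n}{s}} \sim \sqrt{s \log(2n/s)}$. Concretely I would bound $W \le \sqrt{s} + \E \max_{|T|=s}(\|\vg_T\|_2 - \sqrt{s})$ and estimate the second term via Gaussian concentration of the Lipschitz map $\vg \mapsto \|\vg_T\|_2$ together with the union bound over the $\binom{n}{s}$ sets. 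This yields $W \le C(s\log(2n/s))^{1/2}$. (Alternatively one can invoke the third item of the Example, $w(K) \le C\sqrt{\log|K|}$, applied to a suitable net, but the direct Gaussian-concentration computation is cleanest.)

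\medskip
\noindent\textbf{Lower bound.} Here I would exhibit that the top $s$ order statistics of $\{|g_i|\}_{i=1}^n$ already carry the right weight. The point is that $W = \E\big(\sum_{i=1}^s (g_i^*)^2\big)^{1/2} \ge \big(\E \sum_{i=1}^s (g_i^*)^2\big)^{1/2}$ is the wrong direction, so instead I would bound from below by the deterministic fact that with high probability at least $s$ of the $n$ coordinates satisfy $|g_i| \ge t$ for a threshold $t \sim \sqrt{\log(2n/s)}$. Indeed $\P\{|g| \ge t\} \sim e^{-t^2/2}$, so choosing $t$ with $e^{-t^2/2} \approx 2s/n$ makes the expected number of exceedances about $2s$; a binomial concentration argument gives at least $s$ exceedances with constant probability. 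On that event $\sum_{i=1}^s (g_i^*)^2 \ge s t^2 \sim s\log(2n/s)$, hence $\sup_{\vx} \<\vg,\vx\> \ge (s t^2)^{1/2} \sim (s\log(2n/s))^{1/2}$, and taking expectation gives $W \ge c(s\log(2n/s))^{1/2}$.

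\medskip
\noindent\textbf{Main obstacle.} The genuinely delicate point is the lower bound, specifically pinning down the threshold $t \sim \sqrt{\log(2n/s)}$ and controlling the Gaussian tail $\P\{|g| \ge t\}$ sharply enough (both above and below) across the full range $1 \le s \le n$, including the degenerate regimes $s \approx n$ (where $\log(2n/s) \approx \log 2$ is a constant and one must fall back on $W \gtrsim \sqrt{s}$ from part~\ref{part:diam} of Proposition~\ref{prop:mean width}, since $\diam(S_{n,s}) = 2$) and $s$ small. The expected-count-versus-concentration step for the number of large coordinates must be made uniform in these regimes; I expect this case-splitting on the size of $s$ to be where the real work lies, whereas the upper bound is a routine union bound against $\binom{n}{s}$.
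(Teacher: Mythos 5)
Your proposal splits naturally into a half that coincides with the paper and a half that does not. Your upper bound is exactly the paper's proof: identify $\sup_{\vx \in S_{n,s}} \< \vg, \vx\> = \max_{|T|=s} \twonorm{\vg_T}$, bound $\E\twonorm{\vg_T} \le \sqrt{s}$, apply Gaussian concentration to each of the $\binom{n}{s} \le \exp(s\log(en/s))$ supports, and finish by a union bound and integration of the tail. For the lower bound, however, the paper gives no argument at all: it states in one sentence that the bound follows from Sudakov minoration (Theorem~\ref{thm:sudakov}) combined with a tight lower bound on the covering numbers of $S_{n,s}$, and leaves the details to the reader since the lower bound is never used in the paper. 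Your exceedance-counting argument is a genuinely different, self-contained route: choose a threshold $t \sim \sqrt{\log(2n/s)}$ so that the expected number of coordinates with $|g_i| \ge t$ is about $2s$, get at least $s$ exceedances with constant probability by binomial concentration, and conclude $\max_{|T|=s}\twonorm{\vg_T} \ge \sqrt{s}\,t$ on that event. What each buys: the paper's route reuses a general theorem already stated but requires a Gilbert--Varshamov-type packing construction (exponentially many well-separated sparse supports); yours avoids both Sudakov and the coding-theoretic combinatorics at the price of an explicit case analysis in $s$ and careful Gaussian tail estimates.

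One slip in your fallback regime needs fixing. For $s$ comparable to $n$ you claim $W \gtrsim \sqrt{s}$ follows from part~\ref{part:diam} of Proposition~\ref{prop:mean width} ``since $\diam(S_{n,s}) = 2$.'' But that part only gives $w(S_{n,s}) \ge \sqrt{2/\pi}\,\diam(S_{n,s}) = 2\sqrt{2/\pi}$, an absolute constant, which is not enough: in this regime $s\log(2n/s) \sim s$, so you genuinely need $c\sqrt{s}$. The correct and equally easy fallback is to fix a single support $T_0$ with $|T_0|=s$ and note that $W \ge \E\twonorm{\vg_{T_0}}$, where $(\E\twonorm{\vg_{T_0}})^2 = s - \Var(\twonorm{\vg_{T_0}}) \ge s-1$ because $\twonorm{\vg_{T_0}}$ is a $1$-Lipschitz function of a standard Gaussian vector (Gaussian Poincar\'e inequality); equivalently, $S_{n,s}$ contains an $s$-dimensional unit Euclidean ball, whose mean width is of order $\sqrt{s}$. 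With that replacement, your lower-bound argument is sound across the full range $1 \le s \le n$. (The identification $W = w(S_{n,s})$ is off by the factor $2$ coming from part~\ref{part:mean width upper} of Proposition~\ref{prop:mean width} for origin-symmetric sets, but this is harmless and the paper makes the same identification.)
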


\begin{proof}
Let us prove the upper bound. Without loss of generality we can assume that $s \in \N$.  
By representing $S_{n,s}$ as the union of $n \choose s$ $s$-dimensional unit Euclidean balls we see that
$$
w(S_{n,s}) =  \E \max_{|T| = s} \twonorm{\vg_T}.
$$
For each $T$, the Gaussian concentration inequality (see Theorem~\ref{thm:Gaussian concentration} below) yields
$$
\Pr{\twonorm{\vg_T} \geq \E \twonorm{\vg_T} + t} \leq \exp(-t^2/2), \quad t > 0.
$$
Next, $\E \twonorm{\vg_T} \leq (\E \twonorm{\vg_T}^2)^{1/2} = \sqrt{s}$.  Thus the union bound gives 
$$
\Pr{\max_{|T| = s} \twonorm{\vg_T} \geq \sqrt{s} + t} \leq {n \choose s} \exp(-t^2/2)
$$
for $t > 0$.
Note that ${n \choose s} \leq \exp(s \log(en/s))$; integrating gives the desired upper bound in \eqref{eq:sparse mean width}.

The lower bound in \eqref{eq:sparse mean width} follows from Sudakov minoration (see Theorem \ref{thm:sudakov}) 
combined with finding a tight lower bound on the covering number of $S_{n,s}$.  
Since the lower bound will not be used in this paper, we leave the details to the interested reader.
\end{proof}

\section{Applications to sparse signal models} \label{sec:applications}

Our main results stated in the introduction are valid for general signal sets $K$. Now we specialize to the cases where 
$K$ encodes {\em sparsity}. It would be ideal if we could take $K = S_{n,s}$, but this set would not be convex 
and thus the solver \eqref{eq:optimization} would not be known to run in polynomial time.  
We instead take a {\em convex relaxation} of $S_{n,s}$, an effective tactic from the sparsity literature.  
Notice that if $\vx \in S_{n,s}$ then $\onenorm{\vx} \leq \sqrt{s}$ by Cauchy-Schwarz inequality.
This motivates us to consider the convex set
$$
K_{n,s} = \{\vx \in \R^n: \twonorm{\vx} \leq 1, \onenorm{\vx} \leq \sqrt{s}\} = B_2^n \cap \sqrt{s} B_1^n.
$$
$K_{n,s}$ is almost exactly the convex hull of $S_{n,s}$, as is shown in \cite{pv-1-bit}:
\begin{equation}							\label{eq:convexification}
\conv(S_{n,s}) \subset K_{n,s} \subset 2 \conv(S_{n,s}).
\end{equation}
$K_{n,s}$ can be though of a set of {\em approximately sparse} or {\em compressible} vectors. 

If the signal is known to be exactly or approximately sparse, i.e. $\vx \in K_{n,s}$,
we may estimate $\vx$ by solving the convex program 
\begin{equation}					\label{eq:optimization sparse}
 \quad \begin{array}{l}\max \sum_{i=1}^m y_i \< \va_i, \vx' \> \\ \text{subject to} \quad \onenorm{\vx'} \leq \sqrt{s} \quad \text{and} \quad  \twonorm{\vx'} \leq 1.\end{array}
\end{equation}
This is just a restatement of the program \eqref{eq:optimization} for the set $K_{n,s}$.  In our convex relaxation, we do not require that $\hat{\vx} \in S^{n-1}$; this stands in contrast to many previous programs considered in the literature.  Nevertheless, the accuracy of the solution $\hat{\vx}$ and the fact that $\twonorm{\vx} = 1$, implies that $\twonorm{\vx} \approx 1$.

Theorems \ref{thm:fixed X} and \ref{thm:uniform} are supposed to guarantee that $\vx$ can indeed can be estimated
by a solution to \eqref{eq:optimization sparse}. But in order to apply these results, we need to know the mean width of $K_{n,s}$. 
A good bound for it follows from \eqref{eq:convexification} and Lemma \ref{lem:sparse mean width}, which give
\begin{equation}				\label{eq:K sparse mean width}
w(K_{n,s}) \leq 2 w(\conv(S_{n,s}))  \le C \sqrt{s \log(2n/s)}.
\end{equation}
This yields the following version of Corollary \ref{cor:fixed X}.

\begin{corollary}[Estimating a compressible signal]				\label{cor:sparse}
  Let $\va_1,\ldots,\va_m$ be independent standard Gaussian random vectors in $\R^n$,
  and fix $\vx \in K_{n,s}$ satisfying $\twonorm{\vx} = 1$. 
  Assume that the measurements $y_1,\ldots,y_n$ follow the model from Section \ref{sec:main results}.\footnote{Specifically, our assumptions were that 
    $\{y_i\}$ are independent random variables that are jointly independent of $\{\va_i\}$, and that \eqref{eq:glm} holds 
    with some function $\theta$ satisfying \eqref{eq:lambda}.}  
  Let $\d > 0$ and suppose that
  $$
  m \geq C \d^{-2} s \log(2n/s).
  $$
  Then, with probability at least $1 - 8 \exp(-c \delta^2 m)$,
  the solution $\hat{\vx}$ to the convex program \eqref{eq:optimization sparse} satisfies
  $$
  \twonorm{\hat{\vx} - \vx}^2 \leq \delta/\lambda.
  $$
\end{corollary}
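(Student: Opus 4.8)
The plan is to obtain Corollary~\ref{cor:sparse} as a direct specialization of Corollary~\ref{cor:fixed X} to the particular signal set $K = K_{n,s}$. The first thing to observe is that the convex program \eqref{eq:optimization sparse} is nothing but the general program \eqref{eq:optimization} written out for $K = K_{n,s} = B_2^n \cap \sqrt{s}\, B_1^n$, since the two constraints $\onenorm{\vx'} \le \sqrt{s}$ and $\twonorm{\vx'} \le 1$ together describe exactly membership in $K_{n,s}$. Moreover $K_{n,s} \subset B_2^n$, and the fixed signal is assumed to lie in $K_{n,s}$ with $\twonorm{\vx} = 1$, while the noise model and the assumption \eqref{eq:lambda} on $\theta$ are inherited unchanged. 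Thus all hypotheses of Corollary~\ref{cor:fixed X} (and hence of Theorem~\ref{thm:fixed X}) are met with $K = K_{n,s}$.

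First I would recall the mean width estimate: by \eqref{eq:K sparse mean width} there is an absolute constant $C_0$ with
\[
w(K_{n,s}) \le C_0 \sqrt{s \log(2n/s)}, \qquad \text{hence} \qquad w(K_{n,s})^2 \le C_0^2\, s \log(2n/s).
\]
Next I would feed this into the sample-size requirement of Corollary~\ref{cor:fixed X}, which asks for $m \ge C\,\delta^{-2} w(K)^2$. Taking the absolute constant in the present hypothesis large enough to absorb $C_0^2$—concretely, requiring $m \ge (C C_0^2)\,\delta^{-2} s \log(2n/s)$—guarantees
\[
m \ge C\,\delta^{-2} w(K_{n,s})^2.
\]
Corollary~\ref{cor:fixed X} then applies verbatim with $K = K_{n,s}$ and yields, with probability at least $1 - 8\exp(-c\,\delta^2 m)$, the bound $\twonorm{\hat{\vx} - \vx}^2 \le \delta/\lambda$, which is precisely the claimed conclusion.

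There is essentially no hard step here: the full content of the corollary has already been established in Corollary~\ref{cor:fixed X} together with the mean width computation \eqref{eq:K sparse mean width}. The only point requiring a little care is the bookkeeping of absolute constants—ensuring that the constant $C$ appearing in the hypothesis $m \ge C\,\delta^{-2} s \log(2n/s)$ is understood as the product of the constant from Corollary~\ref{cor:fixed X} and the square of the mean-width constant $C_0$, so that the sample-size requirement of the general result is genuinely implied by the sparse one. Once this absorption is arranged, both the probability bound and the accuracy bound transfer without any change.
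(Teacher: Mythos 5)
Your proposal is correct and matches the paper's own (implicit) argument exactly: the paper obtains Corollary~\ref{cor:sparse} by applying Corollary~\ref{cor:fixed X} with $K = K_{n,s}$, using the mean width bound \eqref{eq:K sparse mean width} (which comes from \eqref{eq:convexification} and Lemma~\ref{lem:sparse mean width}) and absorbing the resulting constant into $C$. Nothing is missing; your constant bookkeeping is the only step the paper leaves unstated.
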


\begin{remark}
  In a similar way, one can also specialize the uniform result, Theorem \ref{thm:uniform}, 
  to the approximately sparse case.
\end{remark}

\subsection{1-bit compressed sensing} \label{sec:1-bit CS}

Corollary \ref{cor:sparse} can be easily specialized to various specific models of noise. 
Let us consider some of the interesting models, 
and compute the correlation coefficient $\lambda = \E \theta(g) g$ in \eqref{eq:lambda} for each of them.

\begin{itemize}
  \item[] {\bf Noiseless 1-bit compressed sensing:}
    In the classic noiseless model \eqref{eq:noiseless}, the measurements are given as
    $y_i = \sign( \< \va_i, \vx \> )$ and thus $\theta(z) = \sign(z)$.  Thus
    $$
    \lambda = \E \abs{g} = \sqrt{2/\pi}.
    $$
  Therefore, with high probability we obtain $\twonorm{\hat{\vx} - \vx}^2 \leq \delta$ 
  provided that the number of measurements is $m \geq C \delta^{-2} s \log(2n/s)$.
  This is similar to the results available in \cite{pv-1-bit}.

  \item[] {\bf Random bit flips} 
  Assume that each measurement $y_i$ is only correct with probability $p$, thus
  $$
  y_i = \xi_i \sign( \< \va_i, \vx \> ), \qquad i =1 , 2, \hdots, m
  $$
  where $\xi_i$ are independent $\{-1,1\}$ valued random variables with $\Pr{\xi_i = 1}  = p$, which represent random bit flips. 
  Then $\theta(z) = \sign(z) \cdot \E \xi_1 = 2 \sign(z) (p - 1/2)$ and
  $$
  \lambda = 2 (p - 1/2) \E \abs{g} = 2 \sqrt{2/\pi} (p - 1/2).
  $$
  Therefore, with high probability we obtain $\twonorm{\hat{\vx} - \vx}^2 \leq \delta$ 
  provided that the number of measurements is 
  $m \ge C \d^{-2} (p-1/2)^{-2} s \log(2n/s)$. Thus we obtain a surprising conclusion: 
  
  \medskip
  
  \begin{quote}
    {\em The signal $\vx$ can be estimated even if each measurement is flipped with probablity nearly 1/2.} 
  \end{quote}
  
  \medskip
  
  \noindent Somewhat surprisingly, the estimation of $\vx$ is done by one simple convex program \eqref{eq:optimization sparse}.  
  Of course, if each measurement is corrupted with probability 1/2, recovery is impossible by any algorithm.
 
  \item[] {\bf Random noise before quantization}  
  Assume that the measurements are given as 
  $$
  y_i = \sign( \< \va_i, \vx \> + \nu_i), \qquad i =1 , 2, \hdots, m
  $$
  where $\nu_i$ are iid random variables representing noise added before quantization. 
  This situation is typical in analog-to-digital converters.  It is also the latent variable model from statistics.
  
  Assume for simplicity that $\nu_i$ have density $f(x)$. 
  Then $\theta(z) = 1 - 2 \Pr{\nu_i \leq -z}$, and the correlation coefficient $\lambda = \E \theta(g) g$ 
  can be evaluated using integration by parts, which gives
  $$
  \lambda = \E \theta'(g) = 2 \E f(-g) > 0.
  $$
  A specific value of $\lambda$ is therefore not hard to estimate for concrete densities $f$. For instance, 
  if $\nu_i$ are normal random variables with mean zero and variance $\s^2$, then 
  $$
  \lambda = \E \sqrt{\frac{2}{\pi \sigma^2 }} \exp(-g^2/2 \sigma^2) = \sqrt{\frac{2}{\pi (\sigma^2 + 1)}}.
  $$
  Therefore, with high probability we obtain $\twonorm{\hat{\vx} - \vx}^2 \leq \delta$
  provided that the number of measurements is $m \geq C \d^{-2} (\sigma^2 + 1)  s \log(2n/s)$.  In other words,
  \begin{equation}
  \label{eq:noisy error}
  \twonorm{\hat{\vx} - \vx}^2 \leq C \sqrt{\frac{(\sigma^2 + 1) s \log(2n/s)}{m}}.
  \end{equation}  
  Thus we obtain an unexpected conclusion: 

  \medskip

  \begin{quote}
    {\em The signal $\vx$ can be estimated even when the noise level $\sigma$ 
    eclipses the magnitude of the linear measurements.} 
  \end{quote}

  \medskip

  \noindent Indeed, the average magnitude of the linear measurements is $\E |\< \va_i, \vx \> | = \sqrt{2/\pi}$, 
  while the average noise level $\sigma$ can be much larger.  
  
  Let us also compare to the results available in the standard unquantized compressed sensing model
  \[y_i = \< \va_i, \vx \> + \nu_i \quad i = 1, 2, \hdots, m\]
 where once again we take $\nu_i \sim \NN(0, \sigma^2)$.  Under the assumption that $\onenorm{\vx} \leq \sqrt{s}$ the minimax squared error given by \cite[Theorem 1]{raskutti2009minimax}  is $\delta = c \sigma \sqrt{\frac{s \log n}{m}}$.  A slight variation on their proof yields a minimax squared error under the assumption that $m < n$ and $\vx \in K_{n,s} \cap S^{n-1}$ of $\delta = c \sigma \sqrt{ \frac{s \log(2n/s)}{m}}$.  Up to a constant, this matches the upper bound we have just derived in the 1-bit case in Equation \eqref{eq:noisy error}.  Thus we have another surprising result:
  
   \medskip

  \begin{quote}
    {\em The error in recovering the signal $\vx$ matches the minimax error for the unquantized compressed sensing problem (up to a constant):  Essentially nothing is lost by quantizing to a single bit.} 
  \end{quote}

  \medskip
\end{itemize}

\bigskip

Let us put these results in a perspective of the existing literature on 1-bit compressed sensing. 
The problem of 1-bit compressed sensing, as introduced by Boufounos and Baraniuk in \cite{Boufounos2008}, 
is the extreme version of quantized compressed sensing;
it is particularly beneficial to consider 1-bit measurements in analog-to-digital conversion  
(see the webpage \verb=http://dsp.rice.edu/1bitCS/=).  
Several numerical results are available, and there are a few recent theoretical results as well.  

Suppose that $\vx \in \R^n$ is $s$-sparse. Gupta et al. \cite{Gupta2010} demonstrate that 
the support of $\vx$ can tractably be recovered from either 
1) $O(s \log n)$ nonadaptive measurements assuming a constant dynamic range of $\vx$ 
(i.e. the magnitude of all nonzero entries of $\vx$ is assumed to lie between two constants), or 
2) $O(s \log n)$ adaptive measurements.  
Jacques et al. \cite{Jacques2011} demonstrate that any \textit{consistent} estimate of $\vx$ will be accurate provided that $m \geq O(s \log n)$.  Here consistent means that the estimate $\hat{\vx}$ should have unit norm, be at least as sparse as $\vx$, and agree with the measurements, i.e.
$\sign(\< \va_i, \hat{\vx}\> ) = \sign(\< \va_i, \vx\> )$ for all $i$.  
These results of Jacques et al. \cite{Jacques2011} can be extended to handle adversarial bit flips.  The difficulty in applying these results is that the first two conditions are nonconvex, and thus it is unknown whether there is a polynomial-time solver which is guaranteed to return a consistent solution.  We note that there are heuristic algorithms, including one in \cite{Jacques2011} which often provide such a solution in simulations.

In a dual line or research, Gunturk et al. \cite{gunturk2010, gunturk2010paper} analyze sigma-delta quantization. 
The focus of their results is to achieve an excellent dependence of on the accuracy $\d$ while minimizing the number of bits per measurement.  
However the measurements $y_i$ in sigma-delta quantization are not related to any linear measurements (unlike those in \eqref{eq:noiseless} and \eqref{eq:glm}) but are allowed to be constructed in a judicious fashion (e.g.~iteratively).
Furthermore, in Gunturk et al. \cite{gunturk2010, gunturk2010paper} the number of bits per measurement 
depends on the dynamic range of the nonzero part of $\vx$.  
Similarly, the recent work of Ardestanizadeh et al. \cite{ardestanizadeh2009} requires a finite number of bits per measurement.

The noiseless 1-bit compressed sensing given by the model \eqref{eq:noiseless} 
was considered by the present authors in the earlier paper \cite{pv-1-bit}, where the following convex program was introduced:
\begin{align*}
\min \quad &\onenorm{\vx'} \\
 \text{subject to} \quad &y_i = \sign(\< \va_i, \vx'\> ) \quad i = 1, 2, \hdots, m\\
 \text{and} \quad &\sum_{i=1}^m y_i \< \va_i, \vx' \> = m.
\end{align*}
This program was shown in \cite{pv-1-bit} to accurately recover an $s$-sparse vector $\vx$ from $m = O(s \log (n/s)^2)$ measurements $y_i$.  
This result was the first to propose a polynomial-time solver for 1-bit compressed sensing with provable accuracy guarantees. 
However, it was unclear how to modify the above convex program to account for possible noise. 

The present paper proposes to overcome this difficulty by considering the convex program \eqref{eq:optimization sparse}
(and in the most general case, the optimization problem \eqref{eq:optimization}).
One may note that the program \eqref{eq:optimization sparse} requires the knowledge of a bound on the (approximate) sparsity level $s$.  
In return, it does not need to be adjusted depending on the kind of noise or level of noise.

\subsection{Sparse logistic regression}
In order to give concrete results accessible to the statistics community, we now specialize Corollary \ref{cor:fixed X} to the {\em logistic regression} model. Further, we drop the assumption that $\twonorm{\vx} = 1$ in this section; this will allow easier comparison with the related literature (see below).  

The simple logistic function is defined as 
\begin{equation}				\label{eq:logistic function}
f (z) = \frac{e^{z}}{e^{z} + 1}.
\end{equation}
In the logistic regression model, the observations $y_i \in \{-1,1\}$ are iid random variables satisfying
\begin{equation}				\label{eq:logistic regression}
\Pr{y_i = 1} = f( \< \va_i, \vx \> ), \qquad i =1 , 2, \hdots, m.
\end{equation}
Note that this is a partial case of the generalized linear model \eqref{eq:glm} with $\theta(z) = \tanh( z/2)$. 
We thus have the following specialization of Corollary \ref{cor:fixed X}.

\begin{corollary}[Sparse logistic regression]										\label{cor:logistic regression}
  Let $\va_1,\ldots,\va_m$ be independent standard Gaussian random vectors in $\R^n$,
  and fix $\vx$ satisfying $\vx/\twonorm{\vx} \in K_{n,s}$. 
  Assume that the observations $y_1,\ldots,y_n$ follow the logistic regression model \eqref{eq:logistic regression}. 
  Let $\d > 0$ and suppose that
  $$
  m \geq C \d^{-2} s \log(2n/s).
  $$
  Then, with probability at least $1 - 8 \exp(-c \delta^2 m)$,
  the solution $\hat{\vx}$ to the convex program \eqref{eq:optimization sparse} satisfies
  \begin{equation} 			\label{eq:error with alpha}
  \twonorm{\hat{\vx} - \frac{\vx}{\twonorm{\vx}}}^2 \le \d \max(\twonorm{\vx}^{-1},1).
  \end{equation}
  \end{corollary}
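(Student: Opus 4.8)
The plan is to reduce this to the unit-norm, fixed-signal result already available, Corollary \ref{cor:sparse}, by absorbing the magnitude $r := \twonorm{\vx}$ into the link function. Write $\vx = r\vu$ with $\vu := \vx/\twonorm{\vx}$, so that $\vu \in K_{n,s}$ and $\twonorm{\vu} = 1$ by hypothesis. Since $2f(z) - 1 = \tanh(z/2)$, the logistic model \eqref{eq:logistic regression} gives $\E y_i = 2f(\< \va_i, \vx \> ) - 1 = \tanh(\< \va_i, \vx \> /2) = \theta_r(\< \va_i, \vu \> )$ with the rescaled link $\theta_r(z) := \tanh(rz/2)$. In other words, the observations follow the generalized linear model \eqref{eq:glm} for the \emph{unit-norm} signal $\vu$, but with $\theta_r$ in place of $\theta$; all the magnitude information has migrated into $\theta_r$. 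Since the solution $\hat{\vx}$ to \eqref{eq:optimization sparse} depends only on the $y_i$, the $\va_i$, and the constraint set $K_{n,s}$, Corollary \ref{cor:sparse} applies verbatim to $(\vu, \theta_r)$ once we verify the correlation hypothesis \eqref{eq:lambda}.

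Next I would compute the correlation coefficient $\lambda(r) := \E \theta_r(g) g$. Gaussian integration by parts (Stein's identity $\E g h(g) = \E h'(g)$), exactly as in the ``random noise before quantization'' computation, turns this into the clean formula
$$
\lambda(r) = \frac{r}{2}\, \E \sech^2(rg/2),
$$
which is manifestly positive, so \eqref{eq:lambda} holds. Applying Corollary \ref{cor:sparse} to $\vu$ — whose mean-width bound \eqref{eq:K sparse mean width} makes $m \geq C\d^{-2} s\log(2n/s)$ the correct sample size — then yields, with probability at least $1 - 8\exp(-c\d^2 m)$, the estimate $\twonorm{\hat{\vx} - \vu}^2 \le \d/\lambda(r)$.

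It remains to convert $\d/\lambda(r)$ into the advertised bound $\d \max(r^{-1}, 1)$, and this is where the only genuine work lies: I must show $\lambda(r) \geq c\min(r, 1)$ for an absolute constant $c$. Here the Stein formula makes both regimes transparent. For $r \le 1$, since $\sech^2$ is even and decreasing in $|\cdot|$ and $|rg/2| \le |g/2|$, we have $\sech^2(rg/2) \geq \sech^2(g/2)$, so $\lambda(r) \geq \tfrac{r}{2}\E \sech^2(g/2) = c_0 r$. For $r \geq 1$, the substitution $u = rg/2$ gives $\lambda(r) = \int \sech^2(u)\,\phi(2u/r)\,du$ with $\phi$ the standard normal density, and $\phi(2u/r) \geq \phi(2u)$ for $r \geq 1$, so $\lambda(r) \geq \int \sech^2(u)\,\phi(2u)\,du = c_1 > 0$. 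Together these give $\lambda(r) \geq c\min(r,1)$, hence $\d/\lambda(r) \le \tfrac{\d}{c}\max(r^{-1},1)$; rescaling $\d$ by the absolute constant $c$ (which merely changes the constants $C,c$ in the sample-size hypothesis and the probability bound) produces exactly \eqref{eq:error with alpha}.

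The main obstacle, then, is purely the one-variable estimate $\lambda(r) \gtrsim \min(r,1)$: one must simultaneously capture the linear growth $\lambda(r) \approx r/2$ as $r \to 0$ and the saturation $\lambda(r) \to \sqrt{2/\pi}$ as $r \to \infty$. The formula $\lambda(r) = \tfrac{r}{2}\E\sech^2(rg/2)$ obtained from Stein's identity is precisely what renders both limits elementary; once it is in hand, everything else is a direct appeal to Corollary \ref{cor:sparse} together with the mean-width bound \eqref{eq:K sparse mean width}.
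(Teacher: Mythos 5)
Your proposal is correct and follows essentially the same route as the paper's proof: reduce to the unit-norm signal by absorbing $\twonorm{\vx}$ into the link function $\theta_r(z)=\tanh(rz/2)$, compute $\lambda(r)=\tfrac{r}{2}\,\E \sech^2(rg/2)$ by Gaussian integration by parts, lower-bound $\lambda(r)\gtrsim \min(r,1)$, and invoke Corollary \ref{cor:sparse} with the mean-width bound \eqref{eq:K sparse mean width}. The only (immaterial) difference is in the one-variable estimate: you split into the regimes $r\le 1$ and $r\ge 1$ using monotonicity of $\sech^2$ and of the Gaussian density, while the paper handles both regimes at once by truncating to the event $\{|g|\le 1/\alpha\}$, obtaining the explicit constant $1/6$.
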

 
\begin{proof}
We begin by reducing to the case when $\twonorm{\vx} = 1$ by rescaling the logistic function.  Thus, let $\alpha = \twonorm{\vx}$ and define the scaled logistic function $f_{\alpha}(\vx) = f(\alpha \vx)$.  In particular, 
$$
\Pr{y_i = 1} = f_{\alpha} \big(\< \va_i, \frac{\vx}{\twonorm{\vx}}\> \big).
$$

To apply Corollary~\ref{cor:sparse}, it suffices to compute the correlation coefficient $\lambda$ in \eqref{eq:lambda}. First, by rescaling $f$ we have also rescaled $\theta$, so we consider $\theta(z) = \tanh(\alpha z/2)$.
We can now compute $\lambda$ using integration by parts:
$$
\lambda = \E \theta(g) g = \E \theta'(g) = \frac{\alpha}{2} \E \sech^2(\alpha g/2).
$$
To further bound this quantity below, we can use the fact that $\sech^2(x)$ is an even 
and decreasing function for $x \geq 0$. This yields
\begin{align*}
\lambda &\geq \frac{\alpha}{2} \Pr{\abs{\alpha g/2} \leq 1/2} \cdot \sech^2(1/2) \\
&\geq \frac{\sech^2(1/2)}{2} \cdot \alpha \cdot \Pr{\abs{g} \leq 1/\alpha}
\geq \frac{1}{6} \min(\alpha, 1).
\end{align*}
The result follows from Corollary~\ref{cor:sparse} since $\alpha = \twonorm{\vx}$.
\end{proof}

\begin{remark}
  Corollary~\ref{cor:logistic regression} allows one to estimate the projection of $\vx$ onto the unit sphere.  
  One may ask whether the norm of $\vx$ may be estimated as well.  This depends on the assumptions made 
  (see the literature described below).  However, note that as $\twonorm{\vx}$ grows, the logistic regression model 
  quickly approaches the noiseless 1-bit compressed sensing model,
  in which knowledge of $\twonorm{\vx}$ is lost in the measurements.  
  Thus, since we do not assume that $\twonorm{\vx}$ is bounded, recovery of $\twonorm{\vx}$ becomes impossible.
\end{remark}

For concreteness, we specialized to logistic regression. 
But as mentioned in the introduction, the model \eqref{eq:glm} can be interpreted as the generalized linear model,
so our results can be readily used for various problems in sparse binomial regression. 
Some of the recent work in sparse binomial regression includes the papers 
\cite{negahban2010unified, bunea2008honest,van2008high, bach2010self, ravikumar2010high, meier2008group, kakade2009learning}.  
Let us point to the most directly comparable results.

In \cite{bach2010self, bunea2008honest, kakade2009learning, negahban2010unified} the authors propose to estimate the coefficient vector 
(which in our notation is $\vx$) by minimizing the negative log-likelihood plus an extra $\ell_1$ regularization term.  
Bunea \cite{bunea2008honest} considers the logistic regression model.  
She derives an accuracy bound for the estimate (in the $\ell_1$ norm) under a certain 
{\em condition stabil} and a under bound on the magnitude of the entries of $\vx$.  
Similarly, Bach \cite{bach2010self} and Kakade et al. \cite{kakade2009learning} derive accuracy bounds (again in the $\ell_1$ norm)
under {\em restrictive eigenvalue conditions}. 
The most directly comparable result is given by Negahban et al. \cite{negahban2010unified}.  
There the authors show that if the measurement vectors $\va_i$ have independent subgaussian entries, $\zeronorm{\vx} \leq s$, and $\twonorm{\vx} \leq 1$, 
then with high probability one has $\twonorm{\hat{\vx} - \vx}^2 \le \d$, provided that the number of measurements is 
$m \ge C \d^{-1} s \log n$.  
Their results apply to the generalized linear model \eqref{eq:glm}
under some assumptions on $\theta$.

One main novelty in this paper is that knowledge of the function $\theta$, which defines the model family, 
is completely unnecessary when recovering the coefficient vector. Indeed, the optimization problems \eqref{eq:optimization} 
and \eqref{eq:optimization sparse} do not need to know $\theta$.  This stands in contrast to programs based on maximum likelihood estimation.
This may be of interest in non-parametric statistical applications in which it is unclear which binary model to pick--the logistic model may be chosen somewhat arbitrarily.  

Another difference between our results and those above is in the conditions required.
The above papers allow for more general design matrices than those in this paper, but this necessarily leads to strong assumptions on $\twonorm{\vx}$.  As the inner products between $\< \va_i, \vx \> $ grow large, the logistic regression model approaches the 1-bit compressed sensing model. However, as shown in \cite{pv-1-bit}, accurate 1-bit compressed sensing is impossible for discrete measurement ensembles (not only is it impossible to recovery $\vx$, it is also impossible to recover $\vx/\twonorm{\vx}$).  Thus the above results, all of which \textit{do} allow for discrete measurement ensemsembles, necessitate rather strong conditions on the magnitude of $\< \va_i, \vx \> $, or equivalently, on $\twonorm{\vx}$; these are made explicitly in \cite{bunea2008honest, kakade2009learning, negahban2010unified} and implicitly in \cite{ bach2010self}.  In contrast, our theoretical bounds on the relative error only improve as the average magnitude of $\< \va_i, \vx \> $ increases.

\subsection{Low-rank matrix recovery}\label{sec:low-rank}
We quickly mention that our model applies to single bit measurements of a low-rank matrix.  Perhaps the closest practical application is quantum state tomography \cite{gross2010quantum}, but still, the requirement of Gaussian measurements is somewhat unrealistic.  Thus, the purpose of this section is to give an intuition and a benchmark.  

Let $\mX \in \R^{n_1 \times n_2}$ be a matrix of interest with rank $r$ and Froobenius norm $\fronorm{\mX} = 1$.
Consider that we have $m$ single-bit measurements following the model in the introduction so that $n = n_1 \times n_2$.  Similarly to sparse vectors, the set of low-rank matrices is not convex, but has a natural convex relaxation as follows.  Let
\[K_{n_1, n_2, r} = \{\mX \in \R^{n_1 \times n_2} : \nucnorm{\mX} \leq \sqrt{r}, \fronorm{\mX} \leq 1\}\]
where $\nucnorm{\mX}$ denotes the nuclear norm, i.e., the sum of the singular values of $\mX$.  

In order to apply Theorems \ref{thm:fixed X} and \ref{thm:uniform}, we only need to calculate $w(K_{n_1, n_2, r})$,
as follows:
\[w(K_{n_1, n_2, r}) = 2 \E \sup_{\mX \in K_{n_1, n_2, r}} \< \mG, \mX \> \]
where $\mG$ is a matrix with standard normal entries and the inner product above is the standard entrywise inner product, 
i.e. $\< \mG, \mX \> = \sum_{i,j} \mG_{i,j} \mX_{i,j}$.  Since the nuclear norm and operator norm are dual to each other, we have $\< \mG, \mX \> \leq \opnorm{\mG} \cdot \nucnorm{\mX}$.  Further, for each $\mX \in K_{n_1, n_2, r}$, $\nucnorm{\mX} \leq \sqrt{r}$, and thus
\[w(K_{n_1, n_2, r}) \leq \sqrt{r} \E \opnorm{\mG}.\]
The expected norm of a Gaussian matrix is well studied; one has $\E \opnorm{\mG} \leq \sqrt{n_1} + \sqrt{n_2}$ (see e.g., \cite[Theorem 5.32]{Vershynin2010}).  Thus, $w(K_{n_1, n_2, r}) \leq (\sqrt{n_1} + \sqrt{n_2}) \sqrt{r}$.  It follows that $O((n_1  + n_2) r)$ noiseless 1-bit measurements are sufficient to guarantee accurate recovery of rank-$r$ matrices.  We note that this matches the number of linear (infinite bit precision) measurements required in the low-rank matrix recovery literature (see \cite{candes2011tight}).

\subsection{Extension to measurements with correlated entries} \label{sec:covariance}
A commonly used statistical model would take $\va_i$ to be Gaussian vectors with correlated entries, namely $\va_i \sim \NN(0, \Sigma)$ where $\Sigma$ is a given covariance matrix.  In this section we present an extension of our results to allow such correlations.
Let $\lambda_{\min} = \lambda_{\min}(\Sigma)$ and $\lambda_{\max}=\lambda_{\max}(\Sigma)$ denote the smallest and largest eigenvalues of $\Sigma$; 
the condition number of $\Sigma$ is then $\kappa(\Sigma) = \lambda_{\max}(\Sigma) / \lambda_{\min}(\Sigma)$. It will be convenient to choose the normalization $\twonorm{\Sigma^{1/2} \vx} = 1$;
as before, this may be done by absorbing a constant into the definition of $\theta$.

We propose the following generalization of the convex program \eqref{eq:optimization sparse}: 
\begin{equation}					\label{eq:optimization with Sigma}
\begin{array}{l}
\max \sum_{i=1}^m y_i \< \va_i, \vx' \> \quad \\
\text{subject to} \quad 
\|\vx'\|_1 \leq \sqrt{s/\lambda_{\min}} \quad \text{and} \quad \|\Sigma^{1/2} \vx'\|_2 \leq 1.\end{array}
\end{equation}
The following result extends Corollary \ref{cor:sparse} to general covariance $\Sigma$. For simplicity we restrict
ourselves to exactly sparse signals; however the proof below allows for a more general signal set.

\begin{corollary}										\label{cor:sparse with Sigma}
  Let $\va_1, \ldots, \va_m$ be independent random vectors with distribution $N(0, \Sigma)$.
  Fix $\vx$ satisfying $\zeronorm{\vx} \leq s$ and $\|\Sigma^{1/2} \vx\|_2 = 1$.  
  Assume that the measurements $y_1,\ldots,y_m$ follow the model from Section~\ref{sec:main results}.  
  Let $\d>0$ and suppose that
  $$
  m \geq C \kappa(\Sigma) \, \d^{-2} s \log(2n/s).
  $$
  Then with probability at least $1 - 8 \exp(-c \delta^2 m)$, 
  the solution $\hat{\vx}$ to the convex program \eqref{eq:optimization with Sigma} satisfies
  $$
  \lambda_{\min}(\Sigma) \cdot \twonorm{\hat{\vx} - \vx}^2 
  \leq  \|\Sigma^{1/2} \hat{\vx} - \Sigma^{1/2} \vx\|_2^2
  \leq \delta / \lambda.
  $$
\end{corollary}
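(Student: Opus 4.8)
The plan is to whiten the measurements and thereby reduce to the isotropic Corollary~\ref{cor:sparse}. Since $\va_i \sim N(0,\Sigma)$, I write $\va_i = \Sigma^{1/2}\vg_i$ with $\vg_i$ independent standard Gaussian vectors. Introducing the whitened variable $\vu' := \Sigma^{1/2}\vx'$ and whitened signal $\vu := \Sigma^{1/2}\vx$, we get $\langle \va_i,\vx'\rangle = \langle\vg_i,\vu'\rangle$ and $\langle\va_i,\vx\rangle = \langle\vg_i,\vu\rangle$, so the program \eqref{eq:optimization with Sigma} is exactly the program \eqref{eq:optimization} driven by the \emph{standard} Gaussian vectors $\vg_i$ over the set
$$
K' := \{\vu'\in\R^n:\ \twonorm{\vu'}\le 1,\ \onenorm{\Sigma^{-1/2}\vu'}\le\sqrt{s/\lambda_{\min}}\},
$$
whose solution is $\hat{\vu} = \Sigma^{1/2}\hat{\vx}$. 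Moreover $\E y_i = \theta(\langle\va_i,\vx\rangle) = \theta(\langle\vg_i,\vu\rangle)$ with $\twonorm{\vu}=1$, so the transformed data obey the model \eqref{eq:glm} over $K'$ with the \emph{same} $\theta$ and the same correlation parameter $\lambda$ of \eqref{eq:lambda}. Thus Theorem~\ref{thm:fixed X} (equivalently Corollary~\ref{cor:fixed X}) applies verbatim in the $\vu$-coordinates, provided we check $\vu\in K'$ and estimate $w(K')$.

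Membership is immediate: $\twonorm{\vu}=\twonorm{\Sigma^{1/2}\vx}=1$, while $\onenorm{\Sigma^{-1/2}\vu}=\onenorm{\vx}\le\sqrt{s}\,\twonorm{\vx}\le\sqrt{s/\lambda_{\min}}$, using $\zeronorm{\vx}\le s$ together with $\twonorm{\vx}^2\le\lambda_{\min}^{-1}\twonorm{\Sigma^{1/2}\vx}^2=\lambda_{\min}^{-1}$. The crux is the mean width bound. For $\vu'\in K'$ set $\vz=\Sigma^{-1/2}\vu'$; the two constraints give $\twonorm{\vz}\le\lambda_{\min}^{-1/2}$ and $\onenorm{\vz}\le\lambda_{\min}^{-1/2}\sqrt{s}$, i.e. $\lambda_{\min}^{1/2}\vz\in B_2^n\cap\sqrt{s}B_1^n=K_{n,s}$, whence the key geometric containment $K'\subseteq\lambda_{\min}^{-1/2}\Sigma^{1/2}K_{n,s}$. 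Since mean width is monotone under inclusion and positively homogeneous, $w(K')\le\lambda_{\min}^{-1/2}\,w(\Sigma^{1/2}K_{n,s})$. I would then bound $w(\Sigma^{1/2}K_{n,s})\le\opnorm{\Sigma^{1/2}}\,w(K_{n,s})=\lambda_{\max}^{1/2}\,w(K_{n,s})$ by a Sudakov--Fernique comparison (a form of Slepian's inequality): for the processes $\langle\Sigma^{1/2}\vg,\vx\rangle$ and $\lambda_{\max}^{1/2}\langle\vg,\vx\rangle$ indexed by $\vx\in K_{n,s}-K_{n,s}$, the increment variances satisfy $\langle\Sigma(\vx-\vx'),\vx-\vx'\rangle\le\lambda_{\max}\twonorm{\vx-\vx'}^2$, so the supremum of the former is dominated by that of the latter. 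Combining with the sparse estimate \eqref{eq:K sparse mean width} yields
$$
w(K')\le\sqrt{\kappa(\Sigma)}\,w(K_{n,s})\le C\sqrt{\kappa(\Sigma)\,s\log(2n/s)}.
$$

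With this in hand the hypothesis $m\ge C\kappa(\Sigma)\delta^{-2}s\log(2n/s)$ forces $m\ge C'\delta^{-2}w(K')^2$, so Corollary~\ref{cor:fixed X} in the $\vu$-coordinates gives, with probability at least $1-8\exp(-c\delta^2 m)$, that $\twonorm{\hat{\vu}-\vu}^2=\twonorm{\Sigma^{1/2}\hat{\vx}-\Sigma^{1/2}\vx}^2\le\delta/\lambda$, which is the middle quantity in the claim. The left inequality is then the elementary estimate $\lambda_{\min}\twonorm{\hat{\vx}-\vx}^2\le\twonorm{\Sigma^{1/2}(\hat{\vx}-\vx)}^2$, valid because $\twonorm{\Sigma^{1/2}\vw}^2=\langle\Sigma\vw,\vw\rangle\ge\lambda_{\min}\twonorm{\vw}^2$. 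I expect the main obstacle to be the mean width estimate for the anisotropic set $K'$: extracting the clean $\kappa(\Sigma)$ dependence hinges on routing both constraints \emph{simultaneously} through $K_{n,s}$ via the containment $K'\subseteq\lambda_{\min}^{-1/2}\Sigma^{1/2}K_{n,s}$ and then applying the comparison inequality, rather than treating the weighted $\ell_1$ and $\ell_2$ constraints separately, which would lose the sharp condition-number scaling.
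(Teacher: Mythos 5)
Your proof is correct and follows essentially the same route as the paper's: whiten via $\Sigma^{-1/2}$ to reduce to Corollary~\ref{cor:fixed X} for the signal $\Sigma^{1/2}\vx$ over the transformed feasible set, then bound its mean width by combining the containment of the feasible set in $\lambda_{\min}^{-1/2}K_{n,s}$ with a Slepian/Sudakov--Fernique comparison to extract the factor $\sqrt{\kappa(\Sigma)}$. The only differences are cosmetic (you apply the containment before the comparison inequality, the paper does the reverse, and you spell out the model transfer and the elementary inequality $\lambda_{\min}\twonorm{\hat{\vx}-\vx}^2\le\twonorm{\Sigma^{1/2}(\hat{\vx}-\vx)}^2$, which the paper leaves implicit).
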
 

\begin{remark}
  Theorem \ref{thm:uniform} can be generalized in the same way---the number of measurements required 
  is scaled by $\kappa(\Sigma)$ and the error bound is scaled by $\lambda_{\min}(\Sigma)^{-1}$.
\end{remark}

\begin{proof}[Proof of Corollary \ref{cor:sparse with Sigma}]
The feasible set in \eqref{eq:optimization with Sigma} is 
$K := \{\vx \in \R^n:\, \|\vx\|_1 \leq \sqrt{s/\lambda_{\min}(\Sigma)}, \, \|\Sigma^{1/2} \vx\|_2 \leq 1\}$.
Note that the signal $\vx$ considered in the statement of the corollary is feasible, since 
$\onenorm{\vx} \leq \twonorm{\vx} \sqrt{\zeronorm{\vx}} 
\leq \|\Sigma^{-1/2}\|\cdot \|\Sigma^{1/2} \vx\|_2 \sqrt{s} 
\leq \sqrt{s/\lambda_{\min}(\Sigma)}$.  

Define $\tilde{\va}_i := \Sigma^{-1/2} \va_i$; then $\tilde{\va_i}$ are independent standard normal vectors and 
$\< \va_i, \vx \> = \< \Sigma^{1/2} \tilde{\va}_i, \vx\> =  \< \tilde{\va}_i, \Sigma^{1/2}  \vx\> $.  
Thus, it follows from Corollary \ref{cor:fixed X} applied with $\Sigma^{1/2} \vx$ replacing $\vx$ that if 
\[m \geq C \d^{-2} w(\Sigma^{1/2} K)^2\]
then with probability at least $1 - 8 \exp(-c \d^2 m)$
$$
\|\Sigma^{1/2} \hat{\vx} - \Sigma^{1/2} \vx\|_2 \leq \d/\lambda.
$$
It remains to bound $w(\Sigma^{1/2} K)$.  Since $\Sigma^{1/2} / \|\Sigma^{1/2}\|$ acts as a contraction, Slepian's inequality 
(see \cite[Corollary~3.14]{LT})
gives $w(\Sigma^{1/2} K) \leq \|\Sigma^{1/2}\| \cdot w(K) = \lambda_{\max}(\Sigma)^{1/2} \cdot w(K)$. 
Further, $K \subseteq \lambda_{\min}(\Sigma)^{-1/2} \, K_{n,s}$.  
Thus, it follows from \eqref{eq:K sparse mean width} that 
$w(\Sigma^{1/2} K)^2 \le \kappa(\Sigma) \, w(K_{n,s}) \leq C \kappa(\Sigma) \, s \log(2n/s)$.  
This completes the proof.
\end{proof}

\section{Deducing Theorems \ref{thm:fixed X} and \ref{thm:uniform} from concentration inequalities} \label{sec:deducing}
In this section we show how to deduce our main results, Theorems \ref{thm:fixed X} and \ref{thm:uniform},
from concentration inequalities. These inequalities are stated in Propositions \ref{prop:concentration} and \ref{prop:uniform concentration} below, whose proofs are deferred to Sections~\ref{sec:concentration} and \ref{sec:uniform concentration} respectively.

\subsection{Proof of Theorem \ref{thm:fixed X}}			\label{sec:proof fixed x}
Consider the rescaled objective function from the program \eqref{eq:optimization}:
\begin{equation}
\label{eq:def f}
f_{\vx}(\vx') = \frac{1}{m} \sum_{i=1}^m y_i \< \va_i, \vx' \> .
\end{equation}
Here the subscript $x$ indicates that $f$ is a random function whose distribution depends on $x$ through $y_i$. 
Note that the solution $\hat{\vx}$ to the program \eqref{eq:optimization} satisfies $f_{\vx}(\hat{\vx}) \geq f_{\vx}(\vx)$, since $\vx$ is feasible.  
We claim that for \textit{any} $\vx' \in K$ which is far away from $\vx$, the value $f_{\vx}(\vx')$ is small with high probability.
Thus $\hat{\vx}$ must be near to $\vx$.  

To begin to substantiate this claim, let us calculate $\E f_{\vx}(\vx')$ for a \textit{fixed} vector $\vx' \in K$.

\begin{lemma}[Expectation]  \label{lem:expectation} 
  Fix $\vx \in S^{n-1}, \vx' \in B_2^n$.  Then
  $$
  \E f_{\vx}(\vx') = \lambda \< \vx, \vx'\> 
  $$
  and thus
  $$
  \E [f_{\vx}(\vx) - f_{\vx}(\vx')] = \lambda (1 - \< \vx, \vx' \> ) \geq \frac{\lambda}{2} \twonorm{\vx - \vx'}^2.
  $$
\end{lemma}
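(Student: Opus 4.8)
The plan is to compute the expectation directly, exploiting linearity together with the two layers of randomness in the model. Since the pairs $(\va_i, y_i)$ are identically distributed, every summand in \eqref{eq:def f} has the same expectation, so it suffices to evaluate $\E[y_1 \< \va_1, \vx' \>]$ and conclude $\E f_{\vx}(\vx') = \E[y_1 \< \va_1, \vx' \>]$. To handle this single term I would condition on $\va_1$ and use the defining relation \eqref{eq:glm}, namely $\E[y_1 \mid \va_1] = \theta(\< \va_1, \vx \>)$, which gives
$$
\E[y_1 \< \va_1, \vx' \>] = \E\big[\theta(\< \va_1, \vx \>) \, \< \va_1, \vx' \>\big].
$$

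The crux of the argument is then a Gaussian decomposition of $\vx'$ relative to the direction $\vx$. I would write $\vx' = \< \vx, \vx' \> \vx + \vw$, where $\vw := \vx' - \< \vx, \vx' \> \vx$ is orthogonal to $\vx$. Setting $g := \< \va_1, \vx \>$ and $h := \< \va_1, \vw \>$, the fact that $\va_1$ is a standard Gaussian vector and $\vx \perp \vw$ makes $g$ and $h$ \emph{independent} centered Gaussians, with $g \sim \NN(0,1)$ because $\twonorm{\vx} = 1$. Expanding $\< \va_1, \vx' \> = \< \vx, \vx' \> g + h$ and distributing the expectation yields
$$
\E\big[\theta(g)\, \< \va_1, \vx' \>\big] = \< \vx, \vx' \> \, \E[\theta(g) g] + \E[\theta(g) h].
$$
The second term vanishes since $h$ is independent of $g$ and mean-zero, so $\E[\theta(g) h] = \E[\theta(g)]\,\E[h] = 0$; the first term equals $\lambda \< \vx, \vx' \>$ by the standing assumption \eqref{eq:lambda}. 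This establishes $\E f_{\vx}(\vx') = \lambda \< \vx, \vx' \>$.

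For the second display I would specialize to $\vx' = \vx$, using $\twonorm{\vx} = 1$ to get $\E f_{\vx}(\vx) = \lambda$, so that $\E[f_{\vx}(\vx) - f_{\vx}(\vx')] = \lambda(1 - \< \vx, \vx' \>)$. The final inequality is a short elementary computation: expanding $\twonorm{\vx - \vx'}^2 = 1 - 2\< \vx, \vx' \> + \twonorm{\vx'}^2$ and invoking $\vx' \in B_2^n$, hence $\twonorm{\vx'}^2 \leq 1$, gives $\twonorm{\vx - \vx'}^2 \leq 2(1 - \< \vx, \vx' \>)$, and multiplying through by $\lambda/2 > 0$ produces the claimed bound. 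No single step is a serious obstacle here; the only point requiring care is the independence of $g$ and $h$ in the orthogonal decomposition, which is exactly what kills the cross term and isolates the correlation coefficient $\lambda$.
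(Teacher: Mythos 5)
Your proposal is correct and follows essentially the same route as the paper: reduce to a single term by identical distribution, condition on $\va_1$ to invoke \eqref{eq:glm}, and then use the Gaussian orthogonal decomposition of $\< \va_1, \vx'\>$ relative to $g = \< \va_1, \vx\>$ so that the cross term vanishes and only $\lambda \< \vx, \vx'\>$ survives (the paper writes this same decomposition as $\< \vx, \vx'\> g + (\twonorm{\vx'}^2 - \< \vx, \vx'\>^2)^{1/2} h$ with $g,h$ independent standard normals). Your explicit verification of the final inequality via $\twonorm{\vx'} \le 1$ is a welcome addition, as the paper states it without proof.
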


\begin{proof}
We have
$$
\E f_{\vx}(\vx') = \frac{1}{m} \sum_{i = 1}^m \E y_i \< \va_i, \vx'\> = \E y_1 \< \va_1, \vx'\> .
$$
Now we condition on $\va_1$ to give
\begin{align*}
\E y_1 \< \va_1, \vx'\> &= \E \E[y_1 \< \va_1, \vx' \> |\va_1]\\
 &= \E \theta(\< \va_1, \vx\> ) \< \va_1, \vx'\> ) .
\end{align*}
Note that $\< \va_1, \vx\> $ and $\< \va_1, \vx'\> $ are a pair of normal random variables with covariance $\< \vx, \vx' \> $.  
Thus, by taking $g, h \in \NN(0,1)$ to be independent, we may rewrite the above expectation as
\begin{align*}
&\E \theta(g) \big( \< \vx, \vx'\> g + (\twonorm{\vx'}^2 - \< \vx, \vx'\> ^2)^{1/2} \, h \big)\\
& = \< \vx, \vx'\> \E \theta(g) g =\lambda \< \vx, \vx' \> 
\end{align*}
where the last equality follows from \eqref{eq:lambda}. 
Lemma \ref{lem:expectation} is proved.
\end{proof}

Next we show that $f(\vx')$ does not deviate far from its expectation {\em uniformly} for all $\vx' \in K$.  

\begin{proposition}[Concentration]				\label{prop:concentration}
For each $t>0$, we have
\begin{align*}
&\Pr{\sup_{\vz \in K-K} \abs{f_{\vx}(\vz) - \E f_{\vx}(\vz)} \geq 4 w(K)/\sqrt{m} + t}\\
& \leq 4 \exp(-m t^2/8).
\end{align*}
\end{proposition}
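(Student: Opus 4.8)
The plan is to regard the centered objective as a single stochastic process indexed by $K-K$ and to control its supremum by the two classical ingredients from the theory of sub-Gaussian/empirical processes: a symmetrization bound for the mean, and a generic-chaining tail bound whose deviation scale is the \emph{deterministic} radius of the index set. To set this up I would first reduce to a one-sided supremum. Writing $\vu = \frac1m\sum_{i=1}^m\big(y_i\va_i - \E[y_i\va_i]\big)$, Lemma~\ref{lem:expectation} gives $f_\vx(\vz) - \E f_\vx(\vz) = \< \vu, \vz\> =: X_\vz$, a mean-zero process. Since $K-K$ is symmetric and $X_{-\vz} = -X_\vz$, the two-sided quantity in the statement equals the one-sided supremum $Z := \sup_{\vz \in K-K} X_\vz$; moreover $X_\vz = X_\vz - X_0$ because $0 \in K-K$.

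Next I would verify that $(X_\vz)$ is a sub-Gaussian process and identify its canonical metric. For fixed $\vz,\vz'$ each summand $y_i\< \va_i, \vz-\vz'\> $ is a bounded ($|y_i|=1$) multiple of the Gaussian $\< \va_i,\vz-\vz'\> \sim \NN(0,\twonorm{\vz-\vz'}^2)$, hence sub-Gaussian with variance proxy $\lesssim \twonorm{\vz-\vz'}^2$; averaging $m$ independent centered copies gives $\norm{X_\vz - X_{\vz'}}_{\psi_2} \lesssim \twonorm{\vz-\vz'}/\sqrt m$. Thus the canonical metric is a $1/\sqrt m$ rescaling of the Euclidean metric, and --- crucially --- since $K \subseteq B_2^n$ forces $\twonorm{\vz}\le 2$ for $\vz \in K-K$, the radius of $K-K$ about the origin in this metric is at most $2/\sqrt m$, a deterministic number.

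For the mean I would use symmetrization. Introducing independent Rademacher signs $\eps_i$, one has $\E Z \le 2\,\E\sup_{\vz}\frac1m\sum_i \eps_i y_i\< \va_i,\vz\> $; the product $\eps_i y_i$ is a Rademacher sign independent of $\va_i$, so $\sum_i \eps_i y_i \va_i \stackrel{d}{=} \sqrt m\,\vg$ with $\vg$ standard Gaussian, whence $\E Z \le \frac{2}{\sqrt m}\,\E\sup_{\vz\in K-K}\< \vg,\vz\> = \frac{2}{\sqrt m}\,w(K)$. I would then feed the sub-Gaussian increment bound into the standard tail bound for suprema of sub-Gaussian processes (generic chaining), whose deviation scale is exactly the radius computed above; with radius $2/\sqrt m$ this yields a pure Gaussian tail $\exp(-m t^2/8)$ valid for all $t>0$, and combining it with the mean bound --- absorbing the chaining constant into the $4w(K)/\sqrt m$ term --- gives the stated estimate.

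The main obstacle is obtaining the correct tail scale. A naive bounded-differences argument in the $y_i$ (conditioned on $\{\va_i\}$) produces a deviation scale governed by $\frac1m\sum_i(\sup_{\vz\in K-K}\< \va_i,\vz\> )^2 \approx w(K)^2$, which is far too large and would ruin the clean $mt^2/8$ exponent. The point is that the genuine sub-Gaussianity of the increments (bounded $\times$ Gaussian, \emph{not} merely sub-exponential) together with generic chaining replaces this by the deterministic radius $2/\sqrt m$; establishing this, and tracking the absolute constants so as to land on exactly $4w(K)/\sqrt m + t$ with prefactor $4$, is the only delicate part. Everything else is routine.
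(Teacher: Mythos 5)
Your proposal's first half coincides with the paper's own argument: the reduction to a one-sided supremum over the symmetric set $K-K$, and the mean bound $\E \sup_{\vz} X_\vz \le 2w(K)/\sqrt{m}$ obtained by symmetrization plus the observation that $\e_i y_i \va_i$ is distributed as $\va_i$ (so the symmetrized sum becomes $\sqrt{m}\,\vg$), are exactly the steps in Section~\ref{sec:concentration}. Where you genuinely diverge is the tail. The paper never chains: it uses the \emph{probability} form of symmetrization, inequality \eqref{eq:probability-symmetrization} of Lemma~\ref{lem:symmetrization}, to transfer the entire tail question to the single \emph{Gaussian} process $\< \vg, \vz\> $, $\vz \in K-K$, and then applies Gaussian concentration (Theorem~\ref{thm:Gaussian concentration}, via \eqref{eq:Gaussian concentration}), whose deviation scale is the sharp Lipschitz constant $\sup_{\vz \in K-K}\twonorm{\vz} \le 2$. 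Your route instead treats $X_\vz = \< \vu, \vz\> $ as an abstract sub-Gaussian process and invokes a generic-chaining tail bound; note this is heavier machinery than the problem needs, since bounding $\gamma_2(K-K, \twonorm{\cdot})$ by $C\,w(K)$ for \emph{arbitrary} $K$ requires Talagrand's majorizing measures theorem (Dudley plus Sudakov does not suffice, as the entropy integral can exceed $w(K)$ by a logarithmic factor).

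The genuine gap is in your last step: the claim that, because the radius is $2/\sqrt{m}$, chaining ``yields a pure Gaussian tail $\exp(-mt^2/8)$,'' with all remaining constants absorbable into the $4w(K)/\sqrt{m}$ term. That computation ($\sigma = 2/\sqrt{m}$ giving exponent $-t^2/2\sigma^2 = -mt^2/8$) is Borell--TIS concentration \emph{around the mean}, which is a specifically Gaussian phenomenon; it is not available for merely sub-Gaussian processes. What chaining actually gives is $\Pr{\sup_{\vz} \abs{X_\vz - X_{\vz_0}} \ge C\left(\gamma_2 + u \diam\right)} \le 2\exp(-u^2)$ with an unspecified universal constant $C$ multiplying \emph{both} terms, and with deviations measured above $C\gamma_2$ rather than above $\E \sup X_\vz$ (so ``combining with the mean bound'' is not an available move). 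Setting $Cu\diam = t$ yields the exponent $-c\,m t^2$ with an unknown, and in practice large, $1/c = 4C^2$. Hence your argument proves $\Pr{\sup_{\vz}\abs{X_\vz} \ge C w(K)/\sqrt{m} + t} \le 2\exp(-cmt^2)$, which suffices for all downstream uses (Theorem~\ref{thm:fixed X} with different absolute constants), but it cannot produce the explicit constants $4$, $8$, $4$ in Proposition~\ref{prop:concentration} as stated. The fix is precisely the paper's device: symmetrize in probability first, so that the supremum being controlled is that of an honest Gaussian process, for which the sharp deviation constant comes for free.
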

This result is proved in Section \ref{sec:concentration} using standard techniques of probability in Banach spaces. 

Theorem \ref{thm:fixed X} is a direct consequence of this proposition.

\begin{proof}[Proof of Theorem \ref{thm:fixed X}]
Let $t >0$. 
By Proposition \ref{prop:concentration}, the following event occurs with probability at least $1 - 4 \exp(-m t^2/8)$:
\[
\sup_{\vz \in K} \abs{f_{\vx}(\vz) - \E f_{\vx}(\vz)} \leq 4 w(K) /\sqrt{m} + t.
\]
Suppose the above event indeed occurs. Let us apply this inequality for $\vz = \hat{\vx} - \vx \in K-K$.
By definition of $\hat{\vx}$, we have $f_{\vx}(\hat{\vx}) \geq f_{\vx}(\vx)$. 
Noting that the function $f_{\vx}(\vz)$ is linear in $\vz$, we obtain 
\begin{align}			
&0 \le f_{\vx}(\hat{\vx}) - f(\vx) 
= f_{\vx}(\hat{\vx} - \vx) \nonumber \\
&\le \E [f_{\vx}(\hat{\vx} - \vx)] + 4w(K)/\sqrt{m} + t \nonumber \\
&\le -\frac{\lambda}{2} \twonorm{\hat{\vx} - \vx}^2 + 4w(K)/\sqrt{m} + t.\label{eq:finish fixed}
\end{align}
The last inequality follows from Lemma \ref{lem:expectation}.  
Finally, we choose $t = 4\beta/\sqrt{m}$ and rearrange terms to complete the proof of Theorem \ref{thm:fixed X}.
\end{proof}

\subsection{Proof of Theorem \ref{thm:uniform}}
The argument is similar to that of Theorem~\ref{thm:fixed X} given above. 
We consider the rescaled objective functions with corrupted and uncorrupted measurements: 
\begin{equation}							\label{eq:corrupted uncorrupted}
\begin{array}{rl}
f_{\vx}(\vx') & = \frac{1}{m} \sum_{i=1}^m y_i \< \va_i, \vx' \> , \\ 
\tilde{f}_{\vx}(\vx') & = \frac{1}{m} \sum_{i=1}^m \tilde{y}_i \< \va_i, \vx' \> \\
  &= \frac{1}{m} \sum_{i=1}^m \sign(\< \va_i, \vx\> ) \< \va_i, \vx' \>
  \end{array}
\end{equation}
Arguing as in Lemma \ref{lem:expectation} (with $\theta(z) = \sign(z)$), we have
\begin{equation}							\label{eq:expected f}
\E \tilde{f}_{\vx}(\vx') = \lambda \< \vx, \vx'\> = \E \abs{g} \cdot \< \vx, \vx'\> = \sqrt{2/\pi} \, \< \vx, \vx'\> .
\end{equation}
Similarly to the proof of Theorem \ref{thm:fixed X}, 
we now need to show that $f_{\vx}(\vx')$ does not deviate far from the expectation of $\tilde{f}_{\vx}(\vx')$; 
but this time the result should hold uniformly over not only $\vx' \in K - K$ but also $\vx \in K$ 
and $\vy$ with small Hamming distance to $\tilde{\vy}$.  This is the content of the following proposition.

\begin{proposition}[Uniform Concentration] 		\label{prop:uniform concentration} 
  Let $\d>0$ and suppose that
  $$
  m \geq C \delta^{-6} w(K)^2.
  $$
  Then with probability at least $1 - 8 \exp (- c \delta^2 m)$, we have
  \begin{equation}				\label{eq:prop uniform bound}
  \sup_{\vx, \vz, \vy} \abs{f_{\vx}(\vz) - \E \tilde{f}_{\vx}(\vz)} \leq \delta \sqrt{\log(e/\d)} +  4 \tau \sqrt{\log(e/\tau)}
  \end{equation}
  where the supremum is taken over $\vx \in K \cap S^{n-1}$, $\vz \in K-K$ and $\vy$ satisfying $d_H(\vy, \tilde{\vy}) \leq \tau m$.
\end{proposition}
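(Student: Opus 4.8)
The plan is to control the supremum by splitting it, via the triangle inequality, into a \emph{corruption} part and a \emph{deviation} part:
\[
\abs{f_{\vx}(\vz) - \E\tilde f_{\vx}(\vz)} \le \abs{f_{\vx}(\vz) - \tilde f_{\vx}(\vz)} + \abs{\tilde f_{\vx}(\vz) - \E\tilde f_{\vx}(\vz)}.
\]
The first term isolates the effect of replacing the true bits $\tilde y_i = \sign(\langle\va_i,\vx\rangle)$ by the corrupted bits $y_i$; the second is a uniform deviation of the noiseless objective from its mean, now over \emph{both} $\vx\in K\cap S^{n-1}$ and $\vz\in K-K$. I expect both pieces to funnel into a single geometric estimate, which I will call the extremes lemma: a uniform bound on the sum of the largest few of the values $\abs{\langle\va_i,\vz\rangle}$. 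Concretely, I would prove that with the stated probability,
\[
\sup_{\vz\in K-K}\ \max_{\abs{I}\le k m}\frac{1}{m}\sum_{i\in I}\abs{\langle\va_i,\vz\rangle} \le C\Big(k\sqrt{\log(e/k)} + \frac{w(K)}{\sqrt m}\Big)
\]
for all $0<k\le1$. For a single $\vz$ this is just Gaussian order statistics (the top $k$-fraction of $\abs{\NN(0,1)}$ lies above a threshold $\approx\sqrt{2\log(1/k)}$ and contributes $\approx k\sqrt{\log(1/k)}$ to the normalized sum); the uniformity over $\vz$ and over index sets $I$ is furnished by a net/chaining argument whose cost is measured by $w(K)$.

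For the corruption term, note that $y_i-\tilde y_i\in\{-2,0,2\}$ and is nonzero only on $D=\{i:y_i\ne\tilde y_i\}$ with $\abs{D}\le\tau m$, so
\[
\abs{f_{\vx}(\vz)-\tilde f_{\vx}(\vz)} = \frac1m\abs{\sum_{i\in D}(y_i-\tilde y_i)\langle\va_i,\vz\rangle} \le \frac{2}{m}\sum_{i\in D}\abs{\langle\va_i,\vz\rangle}.
\]
Since $D$ is an arbitrary index set of size at most $\tau m$, this is at most twice the left-hand side of the extremes lemma with $k=\tau$, which yields the $4\tau\sqrt{\log(e/\tau)}$ term (the residual $w(K)/\sqrt m$ contribution is absorbed by the hypothesis $m\ge C\delta^{-6}w(K)^2$).

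For the deviation term I would fix a net $\mathcal T$ of $K\cap S^{n-1}$ in the geodesic metric $d_G$ at scale $\approx\delta$. For each net point $\vx_0$, $\tilde f_{\vx_0}$ is precisely the random function governed by Proposition~\ref{prop:concentration} with $\theta=\sign$, so $\sup_{\vz\in K-K}\abs{\tilde f_{\vx_0}(\vz)-\E\tilde f_{\vx_0}(\vz)}$ is small uniformly in $\vz$ with high probability, and a union bound over $\mathcal T$ preserves this (here the metric entropy of $K\cap S^{n-1}$, hence $w(K)$ and the net scale, dictate how large $m$ must be). To reach an arbitrary $\vx$ from its nearest net point $\vx_0$, I would write
\[
\tilde f_{\vx}(\vz)-\tilde f_{\vx_0}(\vz) = \frac1m\sum_{i=1}^m\big(\sign(\langle\va_i,\vx\rangle)-\sign(\langle\va_i,\vx_0\rangle)\big)\langle\va_i,\vz\rangle,
\]
where the bracket equals $\pm2$ exactly on the indices where the two sign patterns disagree. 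The random hyperplane tessellation theorem \eqref{eq:embedding} bounds the number of such indices by roughly $\tfrac{m}{\pi}d_G(\vx,\vx_0)+\delta m\le C\delta m$, so the extremes lemma with $k\approx\delta$ controls this difference by $C\delta\sqrt{\log(e/\delta)}$; the matching change in the mean, $\sqrt{2/\pi}\,\langle\vx-\vx_0,\vz\rangle$, is negligible for a fine net. Combining the two terms gives \eqref{eq:prop uniform bound}.

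The hard part is the deviation term, and within it the passage from net points to arbitrary $\vx$. Because $\vx\mapsto\sign(\langle\va_i,\vx\rangle)$ is discontinuous, a Lipschitz/chaining bound on $\tilde f_{\vx}$ in $\vx$ is unavailable; the correct substitute is to count how many sign bits flip as $\vx$ moves, which is exactly the content of the tessellation theorem of \cite{pv-embeddings}. I expect this to be the source of the suboptimal exponent in $m\ge C\delta^{-6}w(K)^2$: the tessellation estimate, the two invocations of the extremes lemma (at scales $\tau$ and $\delta$), and the union bound over the geodesic net each consume powers of $\delta$. Proving the extremes lemma uniformly over $K-K$ is the other technical crux, which I expect to handle by combining Gaussian tail and order-statistic bounds with a net argument calibrated to $w(K)$.
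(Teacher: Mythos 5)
Your proposal is correct and is essentially the paper's own proof: the paper's key lemma combines exactly your three ingredients --- a union bound of Proposition \ref{prop:concentration} over a $\delta$-net of $K\cap S^{n-1}$ (obtained via Sudakov minoration), the tessellation theorem of \cite{pv-embeddings} to bound the number of sign disagreements between $\vx$ and its net point by $2m\delta$, and your ``extremes lemma,'' which appears there as a deviation-of-sums bound proved by applying the $\ell_1$-embedding lemma of \cite{pv-embeddings} to each index set of size $s$ together with a union bound over the $\binom{m}{s}$ such sets. The only cosmetic difference is that, rather than splitting off the corruption term by the triangle inequality, the paper absorbs the adversarial flips by taking $s=\tau m+2m\delta$ in a single application of that bound, which also sidesteps the degraded probability that a direct application at scale $k=\tau$ would have when $\tau$ is very small.
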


This result is significantly deeper than Proposition~\ref{prop:concentration}. 
It is based on a recent geometric result from \cite{pv-embeddings} on random tessellations of sets on the sphere.
The proof is given is proved in Section \ref{sec:uniform concentration}. 

Theorem~\ref{thm:uniform} now follows from the same steps as in the proof of Theorem \ref{thm:fixed X} above. \qed

\begin{remark}[Random noise]				\label{rem:random noise}
  The adversarial bit flips allowed in Theorem~\ref{thm:uniform} can be combined with 
  random noise. We considered two models of random noise in Section~\ref{sec:1-bit CS}.
  One was {\em random bit flips} where one would take $\tilde{y}_i = \xi_i \sign(\< \va_i, \vx\> )$;
  here $\xi_i$ are iid Bernoulli random variables satisfying $\Pr{\xi_i = 1} = p$.  
  The proof of Theorem \ref{thm:uniform} would remain unchanged under this model, aside from the calculation
  $$
  \E f_{\vx}(\vx') = \sqrt{2/\pi} \, (2p -1).
  $$
  The end result is that the error bound in \eqref{eq:uniform error bound} would be divided by $2p - 1$.

  Another model considered in Section~\ref{sec:1-bit CS} was {\em random noise before quantization}. 
  Thus we let 
  \begin{equation}\label{eq:another model}
  \tilde{y}_i = \sign( \< \va_i, \vx \> + g_i)
  \end{equation} 
  where $g_i \sim \NN(0, \sigma^2)$ are iid. 
  Once again, a slight modification of the proof of Theorem \ref{thm:uniform} allows the incorporation of such noise. 
  Note that the above model is equivalent to $\tilde{y}_i = \sign( \< \tilde{\va}_i , \tilde{\vx} \> )$
  where $\tilde{\va}_i = (\va_i, g_i)$ and $\tilde{\vx} = (\vx, \sigma)$ (we have concatenated an extra entry onto $\va_i$ and $\vx$).  Thus, by slightly adjusting the set $K$, we are back in our original model.
\end{remark}

\section{Concentration: proof of Proposition~\ref{prop:concentration}}				\label{sec:concentration}

Here we prove the concentration inequality given by Proposition~\ref{prop:concentration}.

\subsection{Tools: symmetrization and Gaussian concentration}
The argument is based on the standard techniques of probability in Banach spaces---symmetrization and the Gaussian concentration inequality.
Let us recall both these tools.

\begin{lemma}[Symmetrization]							\label{lem:symmetrization}
  Let $\e_1, \e_2, \hdots, \e_m$ be independent Rademacher random variables.\footnote{This means that 
    $\P\{\e_i = 1\} = \P\{\e_i = -1\} = 1/2$ for each $i$. The random variables $\e_i$ are assumed to be independent of each other and of any other 
    random variables in question, namely $\va_i$ and $y_i$.}
  Then
  \begin{align}  
  \mu := \E \sup_{\vz \in K-K} |f_{\vx}(\vz) - \E f_{\vx}(\vz)| 
  \leq 2 \E \sup_{\vz \in K-K} \frac{1}{m} \abs{\sum_{i=1}^m \e_i y_i \< \va_i, \vz\> }.\label{eq:expected symmetrization}
  \end{align}
  Furthermore, we have the deviation inequality
  \begin{align}					
  \Pr{\sup_{\vz \in K-K} \abs{f(\vz) - \E f(\vz)} \geq 2 \mu + t} 
  \leq 4\Pr{\sup_{\vz \in K-K} \abs{\sum_{i=1}^m \e_i y_i \< \va_i, \vz \>} > t/2}. \label{eq:probability-symmetrization}
  \end{align}
\end{lemma}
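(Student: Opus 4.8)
The plan is to prove both inequalities by the classical symmetrization technique: introduce an independent ``ghost'' copy of the data and then insert Rademacher signs. The starting observation is that the summands $V_i(\vz) := y_i \< \va_i, \vz\> $ are jointly independent across $i$, since under the model the pairs $(\va_i, y_i)$ are independent (each $y_i$ depends only on $\va_i$ and its own auxiliary randomness). Each $V_i(\vz)$ is integrable because $y_i \in \{-1,1\}$, $\va_i$ is Gaussian and $\vz \in K-K$ is bounded, so every expectation below is finite and $\E f_\vx(\vz) = \frac1m\sum_i \E V_i(\vz)$.

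For the expectation bound \eqref{eq:expected symmetrization}, first I would introduce an independent copy $(\va_i', y_i')$ and the ghost objective $f'_\vx(\vz) = \frac1m\sum_i y_i'\< \va_i', \vz\> $, so that $\E f_\vx(\vz) = \E' f'_\vx(\vz)$ for every $\vz$. Writing $f_\vx(\vz) - \E f_\vx(\vz) = \E'[f_\vx(\vz) - f'_\vx(\vz)]$ and pulling the supremum inside the ghost expectation by Jensen gives $\mu \le \E\E' \sup_\vz \abs{f_\vx(\vz) - f'_\vx(\vz)}$. The difference equals $\frac1m\sum_i (V_i(\vz) - V_i'(\vz))$, a sum of independent \emph{symmetric} terms, so multiplying each by a Rademacher sign $\e_i$ leaves the joint distribution over $\vz$ unchanged. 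The triangle inequality then splits the symmetrized process into the two identically distributed halves built from $V_i$ and from $V_i'$, producing the factor $2$ and the claimed bound with $V_i(\vz) = y_i\< \va_i, \vz\> $.

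For the deviation inequality \eqref{eq:probability-symmetrization}, I would again use the ghost copy but recenter by Markov's inequality. Abbreviate $g(\vz) = f_\vx(\vz) - \E f_\vx(\vz)$, $g'(\vz) = f'_\vx(\vz) - \E f_\vx(\vz)$, and $G = \sup_\vz\abs{g(\vz)}$, so $\E G = \mu$. On the event $\{G \ge 2\mu + t\}$ I would select a measurable maximizer $\vz_0 = \vz_0(\va,y)$ with $\abs{g(\vz_0)} \ge 2\mu + t$; since $\E'\abs{g'(\vz_0)} \le \E'\sup_\vz\abs{g'(\vz)} = \mu$, Markov (conditionally on the primary sample) gives $\Pr{\abs{g'(\vz_0)} > 2\mu} \le 1/2$, whence with conditional probability at least $1/2$ one has $\abs{g(\vz_0) - g'(\vz_0)} \ge t$ and so $\sup_\vz\abs{g(\vz) - g'(\vz)} \ge t$. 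Integrating out yields $\Pr{G \ge 2\mu + t} \le 2\,\Pr{\sup_\vz\abs{g-g'} \ge t}$. As before $g - g'$ is symmetric, so I may insert Rademacher signs, then split by the triangle inequality into two identically distributed pieces each compared to $t/2$; the union bound gives another factor $2$, for a total of $4$ and the threshold $t/2$ in the stated form.

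The hard part will be the deviation step: it hinges on the measurable selection of the maximizer $\vz_0$ depending on the primary sample, followed by the careful conditional application of Markov's inequality to the ghost process evaluated at that random point $\vz_0$ — this is precisely where the ``$2\mu$'' on the left-hand side is spent. The only other point requiring care is the bookkeeping of the normalizing factor $1/m$, which is carried identically through both sides so that the symmetrized sum inherits the same scaling as $f_\vx$; everything else reduces to the routine Rademacher-sign insertion justified by the symmetry of the ghost-difference process.
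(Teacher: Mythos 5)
Your proof is correct, and it takes essentially the route the paper intends: the paper gives no self-contained argument for this lemma (it cites Ledoux--Talagrand for both inequalities), and your ghost-sample symmetrization---Jensen's inequality for the expectation bound, then measurable selection of a maximizer plus a conditional Markov inequality, sign insertion, and a union bound for the deviation bound---is precisely the classical argument behind those citations. As a minor bonus, your version of the deviation inequality keeps the $1/m$ normalization inside the right-hand probability, which is the (slightly stronger) form the paper actually invokes in its proof of Proposition 5.2, so the omission of $1/m$ in the stated inequality is harmless.
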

Inequality \eqref{eq:expected symmetrization} follows e.g., from the proof of \cite[Lemma 6.3]{LT}.  
The proof of inequality \eqref{eq:probability-symmetrization} is contained in \cite[Chapter 6.1]{LT}. \qed

\medskip

\begin{theorem}[Gaussian concentration inequality]  \label{thm:Gaussian concentration}
  Let $(G_{\vx})_{\vx \in T}$ be a centered Gaussian process indexed by a finite set $T$.
  Then for every $r>0$ one has
  $$
  \Pr{\sup_{\vx \in T} G_{\vx}  \geq \E \sup_{\vx \in T} G_{\vx} + r} \leq \exp(-r^2/2\sigma^2)
  $$
  where $\sigma^2 = \sup_{\vx \in T} \E G_{\vx}^2 < \infty$.
\end{theorem}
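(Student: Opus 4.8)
The final statement is the Gaussian concentration inequality (Theorem~\ref{thm:Gaussian concentration}), so let me sketch how I'd prove it.

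The plan is to realize the supremum functional $F(\vg) = \sup_{\vx \in T} G_{\vx}$ as a Lipschitz function of a standard Gaussian vector and then invoke the classical Gaussian isoperimetric/concentration inequality in its abstract form. First I would use the fact that any centered Gaussian process $(G_{\vx})_{\vx \in T}$ indexed by a finite set $T = \{1,\ldots,N\}$ can be represented as $G_{\vx} = \langle \vv_{\vx}, \vh \rangle$ where $\vh$ is a standard Gaussian vector in $\R^N$ and $\vv_{\vx} \in \R^N$ are fixed vectors encoding the covariance structure; concretely, one takes the Cholesky (or square-root) factorization of the covariance matrix $\Sigma_{ij} = \E G_i G_j$, writing $\Sigma = V V^\top$ so that $(G_1,\ldots,G_N) \stackrel{d}{=} V\vh$, and then $\vv_{\vx}$ is the $\vx$-th row of $V$. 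Note $\|\vv_{\vx}\|_2^2 = \E G_{\vx}^2 \le \sigma^2$ for every $\vx \in T$.

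Next I would define $F: \R^N \to \R$ by $F(\vh) = \max_{\vx \in T} \langle \vv_{\vx}, \vh \rangle$ and verify that $F$ is Lipschitz with constant $\sigma$. This is the key routine estimate: for any $\vh, \vh'$, picking $\vx^*$ achieving the max at $\vh$, one has $F(\vh) - F(\vh') \le \langle \vv_{\vx^*}, \vh - \vh' \rangle \le \|\vv_{\vx^*}\|_2 \, \|\vh - \vh'\|_2 \le \sigma \|\vh - \vh'\|_2$, and by symmetry $|F(\vh) - F(\vh')| \le \sigma \|\vh - \vh'\|_2$. Thus $F$ is $\sigma$-Lipschitz, and $F(\vh) \stackrel{d}{=} \sup_{\vx \in T} G_{\vx}$.

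Finally I would apply the abstract concentration of measure for Lipschitz functions of a Gaussian vector: if $F$ is $L$-Lipschitz on $\R^N$ and $\vh$ is standard Gaussian, then $\Pr{F(\vh) \ge \E F(\vh) + r} \le \exp(-r^2/2L^2)$ for all $r > 0$. With $L = \sigma$ this is exactly the claimed bound, since $\E F(\vh) = \E \sup_{\vx \in T} G_{\vx}$. The main obstacle, such as it is, is not in the argument I have outlined but in what one is permitted to assume: the whole statement is really a packaging of the Gaussian isoperimetric inequality, so the genuine content lives in that cited inequality for Lipschitz functions, which I would invoke as a standard tool (see \cite{LT}). Everything else---the representation via the covariance square root and the Lipschitz bound---is elementary and finite-dimensional, so the finiteness of $T$ makes measurability and the existence of the supremum trivial, and no chaining or further probabilistic machinery is needed.
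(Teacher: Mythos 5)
Your proposal is correct. Note, however, that the paper does not actually prove this theorem: it quotes it as a standard tool and points to \cite[Theorem~7.1]{Ledoux}. That cited result is the Lipschitz-function form of Gaussian concentration (for the canonical Gaussian measure on $\R^N$), and your argument supplies precisely the standard bridge from that form to the process form stated here: represent the finite centered Gaussian process as $G_{\vx} = \langle \vv_{\vx}, \vh \rangle$ via a square root $\Sigma = VV^\top$ of the covariance matrix, check that $\vh \mapsto \max_{\vx \in T} \langle \vv_{\vx}, \vh \rangle$ is $\sigma$-Lipschitz using $\|\vv_{\vx}\|_2^2 = \E G_{\vx}^2 \le \sigma^2$, and then invoke the cited inequality with $L = \sigma$. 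Both you and the paper defer the genuine content---the Gaussian isoperimetric inequality---to the literature; your write-up simply makes explicit the elementary, finite-dimensional reduction that the paper's citation leaves implicit, and every step of that reduction (the Cholesky representation, the choice of the maximizer $\vx^*$ in the Lipschitz estimate, and the equality in distribution $F(\vh) \stackrel{d}{=} \sup_{\vx \in T} G_{\vx}$) is correct as written.
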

\noindent A proof of this result can be found e.g.~in \cite[Theorem~7.1]{Ledoux}. \qed

This theorem can be extended to separable sets $T$ 
in metric spaces by an approximation argument. In particular, given a set $K \subseteq B_2^n$ and $r > 0$,
the standard Gaussian random vector $\vg$ in $\R^n$ satisfies
\begin{equation}							\label{eq:Gaussian concentration}
\Pr{\sup_{\vx \in K-K} \< \vg, \vx \> \geq w(K) + r} \leq \exp(-r^2/2).
\end{equation}

\medskip

\subsection{Proof of Proposition \ref{prop:concentration}}
We apply the first part \eqref{eq:expected symmetrization} of Symmetrization Lemma~\ref{lem:symmetrization}. Note that 
since $\va_i$ have symmetric distributions and $y_i \in \{-1,1\}$, the random vectors 
$\e_i y_i \va_i$ has the same (iid) distribution as $\va_i$.  Using the rotational invariance and the symmetry of the Gaussian distribution,
we can represent the right hand side of \eqref{eq:expected symmetrization} as 
\begin{align}				
&\hspace{-2mm}\sup_{\vz \in K-K} \frac{1}{m} \abs{\sum_{i=1}^m \e_i y_i \< \va_i, \vz\> } 
\stackrel{dist}{=}\sup_{\vz \in K-K} \frac{1}{m} \abs{\sum_{i=1}^m \< \va_i, \vz\> } \nonumber\\ 
&\stackrel{dist}{=} \frac{1}{\sqrt{m}} \sup_{\vz \in K-K} \abs{\< \vg, \vz\> }
= \frac{1}{\sqrt{m}} \sup_{\vz \in K-K} \< \vg, \vz\> \label{eq:decouple}
\end{align}
where $\stackrel{dist}{=}$ signifies the equality in distribution. 
So taking the expectation in \eqref{eq:expected symmetrization} we obtain 
\begin{align}							
\E\sup_{\vz \in K-K} |f_{\vx}(\vz) - \E f_{\vx}(\vz)|  
\le \frac{2}{\sqrt{m}} \, \E\sup_{\vz \in K-K} \< \vg, \vz\>
= \frac{2w(K)}{\sqrt{m}}.\label{eq:expected deviation}
\end{align}

To supplement this expectation bound with a deviation inequality, 
we use the second part \eqref{eq:probability-symmetrization} of Symmetrization Lemma~\ref{lem:sparse mean width} 
along with \eqref{eq:expected deviation} and \eqref{eq:decouple}. This yields
$$
\Pr{\sup_{\vz \in K-K} |f_{\vx}(\vz) - \E f_{\vx}(\vz)| \geq \frac{4w(K)}{\sqrt{m}} + t} 
\leq 4 \Pr{\frac{1}{\sqrt{m}} \sup_{\vz \in K-K} \< \vg, \vz\> > t/2}.
$$
Now it remains to use the Gaussian concentration inequality \eqref{eq:Gaussian concentration} with 
$r = t \sqrt{m}/2$. The proof of Proposition \ref{prop:concentration} is complete.
\qed

\section{Concentration: proof of Proposition~\ref{prop:uniform concentration}}					\label{sec:uniform concentration}

Here we prove the uniform concentration inequality given by Proposition~\ref{prop:uniform concentration}.
Beside standard tools in geometric functional analysis such as Sudakov minoration for covering numbers, 
our argument is based on the recent work \cite{pv-embeddings} on random hyperplane tessellations. 
Let us first recall the relevant tools. 

\subsection{Tools: covering numbers, almost isometries and random tessellations}

Consider a set $T \subset \R^n$ and a number $\e>0$. Recall that an {\em $\e$-net of $T$} (in the Euclidean norm)
is a set $N_\e \subset T$ which has the following property:
for every $\vx \in T$ there exists $\bar{\vx} \in N_\e$ satisfying $\twonorm{\vx - \bar{\vx}} \leq \e$.
The {\em covering number of $T$} to precision $\e$, which we call $N(T, \e)$, is the minimal cardinality of an $\e$-net of $T$.
The covering numbers are closely related to the mean width, as shown by the following well-known inequality:

\begin{theorem}[Sudakov Minoration] \label{thm:sudakov} 
  Given a set $T \subset \R^n$ and a number $\e > 0$, one has
  $$
  \log N(K, \e) \le \log N(K-K, \e) \leq C \e^{-2} w(K)^2.
  $$
\end{theorem}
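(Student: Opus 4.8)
The plan is to recognize the right-hand side as the expected supremum of a natural Gaussian process and then invoke the classical Sudakov minoration for such processes, reserving the first inequality for a short translation argument. Concretely, I would consider the Gaussian process $G_{\vt} := \< \vg, \vt\>$ indexed by $\vt \in K-K$, where $\vg$ is the standard Gaussian vector appearing in the definition \eqref{eq:mean width} of the mean width. Its increments satisfy $\E (G_{\vs} - G_{\vt})^2 = \twonorm{\vs - \vt}^2$, so the canonical metric of this process is exactly the Euclidean metric, and its covering numbers in that metric coincide with the $N(K-K, \e)$ appearing in the statement. Moreover, by the very definition of mean width, $\E \sup_{\vt \in K-K} G_{\vt} = w(K)$.

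The first step is then to quote the classical Sudakov minoration inequality (see e.g.~\cite{LT}), which asserts that for every set $T \subset \R^n$,
$$
\e \sqrt{\log N(T, \e)} \le C \, \E \sup_{\vt \in T} \< \vg, \vt\> , \qquad \e > 0.
$$
Taking $T = K - K$ and using the identification of the expected supremum with $w(K)$ yields
$$
\log N(K-K, \e) \le C \e^{-2} w(K)^2,
$$
which, after rearranging, is precisely the right-hand inequality of the theorem.

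It then remains to justify the first inequality $\log N(K, \e) \le \log N(K-K, \e)$. For this I would use that Euclidean covering numbers are invariant under translation (immediate, and parallel to Proposition~\ref{prop:mean width}, part~\ref{part:affine}): fixing any $\vx_0 \in K$ gives $N(K, \e) = N(K - \vx_0, \e)$. Since $\vx_0 \in K$, the translated set obeys $K - \vx_0 \subseteq K - K$, and covering numbers are monotone under set inclusion (up to the harmless internal-versus-external net factor, absorbed into the constant). Combining these observations gives $N(K, \e) \le N(K-K, \e)$, hence the left-hand inequality.

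The entire substance of the argument is contained in Sudakov minoration, which I would cite rather than reprove, since its proof (via Gaussian comparison and Slepian-type inequalities) is standard. Consequently there is no genuine obstacle here; the only points requiring care are bookkeeping ones---verifying that the canonical metric of $\vt \mapsto \< \vg, \vt\>$ is the Euclidean metric so that the two notions of covering number agree, and handling the passage from $K-K$ back to $K$ cleanly via translation and monotonicity. Both are routine.
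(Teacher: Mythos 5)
Your proposal is correct and takes essentially the same route as the paper: the paper gives no argument beyond citing \cite[Theorem~3.18]{LT}, which is precisely the classical Sudakov minoration you invoke, and your specialization to the canonical process $G_{\vt} = \< \vg, \vt\> $ on $K-K$ (whose expected supremum is $w(K)$ by definition of mean width), together with the translation-and-inclusion step for the left inequality, is exactly the intended routine deduction. The one caveat you already flag---internal covering numbers are not literally monotone under inclusion, so passing from $K-\vx_0 \subseteq K-K$ to $N(K,\e) \le N(K-K,\e)$ costs a factor of $2$ in $\e$---is harmless here, since it is absorbed into the constant $C$ in the final bound, which is all the paper ever uses.
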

\noindent A proof of this theorem can be found e.g.~in \cite[Theorem 3.18]{LT}.		\qed

\medskip

We will also need two results from the recent work \cite{pv-embeddings}. 
To state them conveniently, let $\mA$ denote the $m \times n$ matrix with rows $\va_i$. 
Thus $\mA$ is standard Gaussian matrix with iid standard normal entries. 
The first (simple) result guarantees that $\mA$ acts as a almost isometric embedding from $(K, \twonorm{\cdot})$ into $(\R^m, \onenorm{\cdot})$.  

\begin{lemma}[\cite{pv-embeddings} Lemma~2.1]			\label{lem:l1 embedding}  
  Consider a subset $K \subset B_2^n$. Then for every $u > 0$ one has
  $$
  \Pr{\sup_{\vx \in K-K} \abs{\frac{1}{m} \onenorm{\mA \vx} - \sqrt{\frac{2}{\pi}} \twonorm{\vx}} \geq \frac{4 w(K)}{\sqrt{m}} + u}
  \leq 2 \exp\left( - \frac{m u^2}{2}\right).
  $$
\end{lemma}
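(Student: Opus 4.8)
The plan is to treat the quantity inside the supremum as a centered empirical process and to apply the standard two-step recipe of probability in Banach spaces: first bound the \emph{expected} supremum, then establish concentration of the supremum about its mean. Write
\[
F(\mA) := \sup_{\vx \in K-K} \Bigl| \tfrac1m \onenorm{\mA\vx} - \sqrt{2/\pi}\,\twonorm{\vx}\Bigr|,
\qquad \tfrac1m\onenorm{\mA\vx} = \tfrac1m\sum_{i=1}^m |\langle \va_i,\vx\rangle|.
\]
For a fixed $\vx$, each $\langle\va_i,\vx\rangle$ is $\NN(0,\twonorm{\vx}^2)$, so $\E|\langle\va_i,\vx\rangle| = \sqrt{2/\pi}\,\twonorm{\vx}$; hence for every fixed $\vx$ the summands $|\langle\va_i,\vx\rangle| - \sqrt{2/\pi}\twonorm{\vx}$ are i.i.d. and centered, and $F(\mA)$ is the supremum of a centered process. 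It then suffices to prove (i) $\E F(\mA) \le 4 w(K)/\sqrt m$ and (ii) that $F$ concentrates about its mean at rate $2\exp(-mu^2/2)$.

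For step (i), I would first apply symmetrization (\cite[Lemma 6.3]{LT}) to pass to a Rademacher average,
\[
\E F(\mA) \le \frac2m\, \E \sup_{\vx\in K-K}\Bigl|\sum_{i=1}^m \e_i\, |\langle\va_i,\vx\rangle|\Bigr|,
\]
where the $\e_i$ are independent Rademacher variables. Since $t\mapsto|t|$ is a contraction vanishing at $0$, the contraction principle \cite{LT}, applied conditionally on $\mA$, strips the inner absolute value at the cost of a factor $2$:
\[
\E_\e \sup_{\vx}\Bigl|\sum_i \e_i|\langle\va_i,\vx\rangle|\Bigr| \le 2\,\E_\e \sup_{\vx}\Bigl|\sum_i \e_i\langle\va_i,\vx\rangle\Bigr| = 2\,\E_\e\sup_\vx\bigl|\langle \textstyle\sum_i \e_i\va_i,\,\vx\rangle\bigr|.
\]
Finally $\sum_i \e_i\va_i$ has the same distribution as $\sqrt m\,\vg$ for a standard Gaussian $\vg\in\R^n$ (by symmetry and rotation invariance of the Gaussian, exactly as in \eqref{eq:decouple}); taking expectations and invoking $\E\sup_{\vx\in K-K}|\langle\vg,\vx\rangle| = w(K)$ from Proposition~\ref{prop:mean width}(\ref{part:abs}) then yields $\E F(\mA)\le 4w(K)/\sqrt m$, matching the leading term exactly.

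For step (ii), I would regard $F$ as a function of the Gaussian matrix $\mA$ (a standard Gaussian vector in $\R^{mn}$) and verify that it is Lipschitz in the Frobenius norm. For fixed $\vx$, the map $\mA\mapsto\frac1m\onenorm{\mA\vx}$ has gradient with rows $\frac1m\sign(\langle\va_i,\vx\rangle)\vx$, whose Frobenius norm is $\twonorm{\vx}/\sqrt m \le 2/\sqrt m$ on $K-K$; since a supremum of $L$-Lipschitz functions is $L$-Lipschitz, $F$ is $O(1/\sqrt m)$-Lipschitz. The Gaussian concentration inequality for Lipschitz functions (\cite{Ledoux}) then gives $\Pr{|F(\mA)-\E F(\mA)|\ge u}\le 2\exp(-cmu^2)$, and combining this with the expectation bound of step (i) completes the proof.

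The main obstacle I anticipate is not conceptual but a matter of tracking constants so that they hit the stated bound exactly. The symmetrization and contraction factors must compose to produce precisely $4w(K)/\sqrt m$ (as they do above), and the Lipschitz estimate in step (ii) must be sharp enough—keeping careful control of $\twonorm{\vx}$ over $K-K$ rather than crudely bounding it—to land the exponent $mu^2/2$. The genuinely substantive ideas, namely stripping the absolute value by contraction and identifying the resulting Rademacher/Gaussian average with the mean width, are routine; everything surrounding them is a Lipschitz-concentration computation.
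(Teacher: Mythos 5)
You should first note that this paper never proves Lemma~\ref{lem:l1 embedding}: it is imported verbatim from the companion paper \cite{pv-embeddings}, so your attempt can only be compared against the standard argument, which is also exactly how this paper proves its own analogue, Proposition~\ref{prop:concentration} (symmetrization plus Gaussian concentration). Your step (i) follows that recipe and is correct in every detail: symmetrization costs a factor $2$, the contraction principle with $\phi(t)=|t|$ costs another factor $2$, $\sum_i \e_i \va_i$ has the law of $\sqrt{m}\,\vg$, and Proposition~\ref{prop:mean width}, part~\ref{part:abs}, identifies $\E \sup_{\vx \in K-K} |\langle \vg, \vx\rangle|$ with $w(K)$, giving precisely the term $4w(K)/\sqrt{m}$.

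The genuine gap is in step (ii), and it is \emph{not}, as you suggest, a bookkeeping issue that sharper control of $\twonorm{\vx}$ can repair. Since $K \subset B_2^n$, the set $K-K$ contains points of norm $2$ whenever $K$ contains antipodal unit vectors, so the Lipschitz constant of your $F$ is genuinely $2/\sqrt{m}$, and Gaussian concentration yields $2\exp(-mu^2/8)$, not $2\exp(-mu^2/2)$. Moreover, the loss is not an artifact of the Lipschitz route: for $K = \{\pm \ve_1\}$ the supremum equals $2\,\bigl|\frac{1}{m}\sum_{i=1}^m |g_i| - \sqrt{2/\pi}\bigr|$ with $g_i$ i.i.d.\ standard normal, and Cram\'er's theorem gives the tail rate $I\bigl(\sqrt{2/\pi}+u/2\bigr) \approx u^2/\bigl(8(1-2/\pi)\bigr)$, which is strictly smaller than $u^2/2$; hence for large $m$ the inequality as printed (supremum over $K-K$ with exponent $mu^2/2$) is false, and no proof can close the gap you left open. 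What your argument honestly establishes is the lemma with $2\exp(-mu^2/8)$ on the right-hand side; the constant $1/2$ is only attainable if the supremum runs over a set contained in $B_2^n$ rather than over $K-K$. This weaker form is all that the paper actually uses: in Lemma~\ref{lem:helpful events} the bound is invoked with $u^2 \sim \log(em/s)$, where the factor $4$ in the exponent is absorbed into absolute constants. (The same radius-versus-diameter slip occurs in the paper's own display \eqref{eq:Gaussian concentration}, whose variance proxy over $K-K$ is $4$, not $1$.) In short: your method is the right one and matches the source; the error is your closing claim that careful constant-tracking lands the exponent $mu^2/2$.
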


The second (not simple) result demonstrates that the discrete map $\vx \mapsto \sign(\mA \vx)$ acts as an almost isometric
embedding from ($K \cap S^{n-1}, d_G$) into the Hamming cube $(\{-1,1\}^m, d_H)$. Here $d_G$ and $d_H$ denote the geodesic and Hamming 
metrics respectively; the sign function is applied to all entries of $\mA \vx$, thus $\sign(\mA \vx)$ is the vector
with entries $\sign(\< \va_i, \vx\> )$, $i=1,\ldots,m$.

\begin{theorem}[\cite{pv-embeddings} Theorem 1.5]			\label{thm:tessellations}
  Consider a subset $K \subset B_2^n$ and let $\d > 0$. Let 
\[m \geq C \delta^{-6} w(K)^2.\]
Then with probability at least $1 - 2 \exp(-c \delta^2 m)$, the following holds for all $\vx, \vx' \in K\cap S^{n-1}$ :
\[\abs{\frac{1}{\pi} d_G(\vx, \vx') - \frac{1}{m} d_H(\sign(\mA \vx), \sign(\mA \vx'))} \leq \delta.\]
\end{theorem}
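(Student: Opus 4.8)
The plan is to establish the uniform tessellation estimate by upgrading a fixed-pair concentration bound to all of $K\cap S^{n-1}$ through a net argument, absorbing the discontinuity of the sign map with the $\ell_1$-embedding bound of Lemma~\ref{lem:l1 embedding}. I would begin with the pointwise picture: for fixed $\vx,\vx'\in K\cap S^{n-1}$, the hyperplane with normal $\va_i$ separates $\vx$ from $\vx'$ exactly when $\sign\langle\va_i,\vx\rangle\neq\sign\langle\va_i,\vx'\rangle$, an event of probability $\frac1\pi d_G(\vx,\vx')$ by rotational invariance of the Gaussian. Thus $\frac1m d_H(\sign(\mA\vx),\sign(\mA\vx'))$ is an average of $m$ independent Bernoulli indicators with mean $\frac1\pi d_G(\vx,\vx')$, and Hoeffding's inequality gives
\[
\Pr{\abs{\tfrac1m d_H(\sign(\mA\vx),\sign(\mA\vx'))-\tfrac1\pi d_G(\vx,\vx')}\ge t}\le 2\exp(-2mt^2).
\]

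Next I would discretize. Let $N\subset K\cap S^{n-1}$ be an $\e$-net in the Euclidean metric; Sudakov minoration (Theorem~\ref{thm:sudakov}) gives $\log|N|\le C\e^{-2}w(K)^2$. Taking $t\sim\d$ above and a union bound over the $|N|^2$ pairs of net points controls the defect at scale $\d$ for all pairs drawn from $N$, as soon as $m\d^2\gtrsim\e^{-2}w(K)^2$.

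The main obstacle, and the source of the $\d^{-6}$ exponent, is passing from $N$ to all of $K\cap S^{n-1}$, since $\sign$ is discontinuous and a point within distance $\e$ of a net point $\bar{\vx}$ can still lie on the opposite side of many hyperplanes. The key is a uniform bound on the number of hyperplanes separating $\vx$ from $\bar{\vx}$. Whenever $\sign\langle\va_i,\vx\rangle\neq\sign\langle\va_i,\bar{\vx}\rangle$ one has $\abs{\langle\va_i,\bar{\vx}\rangle}\le\abs{\langle\va_i,\vx-\bar{\vx}\rangle}$, so for any threshold $\eta>0$ this count is at most
\[
\#\{i:\abs{\langle\va_i,\bar{\vx}\rangle}\le\eta\}+\#\{i:\abs{\langle\va_i,\vx-\bar{\vx}\rangle}>\eta\}.
\]
I would bound the second term by Markov's inequality by $\frac1\eta\onenorm{\mA(\vx-\bar{\vx})}$, which Lemma~\ref{lem:l1 embedding} controls uniformly over $\vx-\bar{\vx}\in K-K$ (with $\twonorm{\vx-\bar{\vx}}\le\e$) by $\frac{m}{\eta}(\e+w(K)/\sqrt m)$ up to constants. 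The first term counts hyperplanes passing within $\eta$ of the net point $\bar{\vx}$; it is a sum of independent indicators of mean $\sim m\eta$, which I would control by Hoeffding and a union bound over $N$.

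Finally I would balance the parameters. Choosing $\eta\sim\d$ keeps the near-hyperplane count at $\sim\d m$, and forcing the $\ell_1$ term to match then requires $\e\sim\d^2$ together with $m\gtrsim\d^{-4}w(K)^2$. Substituting $\e\sim\d^2$ into the net cardinality gives $\log|N|\lesssim\d^{-4}w(K)^2$, so the union bounds above, whose tails are $\exp(-cm\d^2)$, demand $m\d^2\gtrsim\d^{-4}w(K)^2$, i.e.\ $m\gtrsim\d^{-6}w(K)^2$, which is precisely the stated hypothesis. Combining the net-level concentration with the approximation error then yields the uniform estimate, holding simultaneously for all $\vx,\vx'\in K\cap S^{n-1}$ with the stated probability.
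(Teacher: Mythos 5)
A point of reference first: this paper never proves Theorem~\ref{thm:tessellations} at all --- it is imported verbatim from \cite{pv-embeddings} (Theorem 1.5 there), so the only in-paper material to compare against is the way the same toolkit (Sudakov nets, Lemma~\ref{lem:l1 embedding}, sign-flip counting) is deployed in the proof of Proposition~\ref{prop:uniform concentration}. That said, your skeleton is essentially the argument of the cited reference: fixed-pair Hoeffding concentration of the fraction of separating hyperplanes around $\frac{1}{\pi}d_G$, a Sudakov net at scale $\e$, and a stability step dominating $d_H(\sign(\mA\vx),\sign(\mA\bar{\vx}))$ by the number of hyperplanes passing within $\eta$ of the net point plus the number of indices with $|\< \va_i,\vx-\bar{\vx}\> |>\eta$, the latter handled by Markov plus the $\ell_1$-embedding lemma. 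Your balance $\eta\sim\d$, $\e\sim\d^2$, $m\gtrsim\d^{-6}w(K)^2$ is also the right one (in \cite{pv-embeddings} this decomposition is packaged as a ``soft Hamming distance,'' but the mechanism is the same).

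There is, however, one genuine quantitative gap: your probability bookkeeping in the $\ell_1$ step does not deliver the stated $1-2\exp(-c\d^2m)$. Lemma~\ref{lem:l1 embedding} bounds $\frac{1}{m}\onenorm{\mA\vz}$ by $\sqrt{2/\pi}\twonorm{\vz}+4w(K)/\sqrt{m}+u$ only with probability $1-2\exp(-mu^2/2)$, and you silently dropped the deviation term $u$. To make your second term $\frac{m}{\eta}\bigl(\e+w(K)/\sqrt{m}+u\bigr)$ of order $\d m$ with $\eta\sim\d$, you are forced to take $u\lesssim\d\eta\sim\d^2$, and then this event fails with probability $2\exp(-cm\d^4)$, which for small $\d$ is much larger than the claimed $2\exp(-c\d^2m)$; no constant can absorb a power of $\d$ in the exponent. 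Rebalancing cannot help, since $\eta\lesssim\d$ is forced by the near-hyperplane count (its mean is $\sim\eta m$ and must stay below $\d m$). The standard repair is to use the scale of the set rather than apply the lemma as a black box: apply Lemma~\ref{lem:l1 embedding} to the rescaled set $\e^{-1}\bigl[(K-K)\cap\e B_2^n\bigr]$, whose mean width is at most $2w(K)/\e$, take $u$ of constant order, and rescale back. This gives $\sup\bigl\{\frac{1}{m}\onenorm{\mA\vz}:\vz\in K-K,\ \twonorm{\vz}\le\e\bigr\}\lesssim\e+w(K)/\sqrt{m}$ with failure probability $2\exp(-cm)$ --- equivalently, the Gaussian concentration underlying the lemma has variance parameter proportional to the squared radius of the index set, not to $1$. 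With that substitute, each of your three events fails with probability at most $\exp(-c\d^2m)$ under the hypothesis $m\ge C\d^{-6}w(K)^2$, and the theorem follows along the lines you outline.
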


\medskip

\subsection{Proof of  Proposition \ref{prop:uniform concentration}}
Let us first assume that $\tau = 0$ for simplicity; the general case will be discussed at the end of the proof. 
With this assumption, \eqref{eq:corrupted uncorrupted} becomes
\begin{equation}							\label{eq:f tildef}
f_{\vx}(\vz) = \tilde{f}_{\vx}(\vz) 
= \frac{1}{m} \sum_{i=1}^m \sign(\< \va_i, \vx\> ) \< \va_i, \vz \> .
\end{equation}
To be able to approximate $\vx$ by a net, we use Sudakov minoration (Theorem \ref{thm:sudakov})
and find a $\d$-net $N_{\d}$ of $K \cap S^{n-1}$ whose cardinality satisfies
\begin{equation}							\label{eq:N delta}
\log\abs{N_\d} \leq C \d^{-2} w(K)^2.
\end{equation}

\begin{lemma}				\label{lem:helpful events}
  Let $\d>0$ and assume that $m \geq C \d^{-6} w(K)^2$.  Then we have the following.
  \begin{enumerate}[1.] 	
    \item (Bound on the net.) 		\label{part:on net}
    With probability at least $1 - 4\exp(-c m \d^2)$ we have
    \begin{equation}			\label{eq:concentration on cover}
    \sup_{\vx_0, \vz} \abs{f_{\vx_0}(\vz) - \E f_{\vx_0}(\vz)} \leq \delta
    \end{equation}
    where the supremum is taken over all $\vx_0 \in N_\d$ and all $\vz \in K-K$.

    \item (Deviation of sign patterns.)  		\label{part:sign patterns}
    For $\vx, \vx_0 \in K \cap S^{n-1}$, consider the set   
    $$
    T(\vx, \vx_0) := \{i \in [m]: \sign(\< \va_i, \vx \> ) \neq \sign(\< \va_i, \vx_0 \> ) \}.
    $$
    Then, with probability at least $1 - 2 \exp(-c m \d^2)$ we have
    \begin{equation}			\label{eq:deviation of signs}
    \sup_{\vx, \vx_0} |T(\vx, \vx_0)| \leq 2m \d
    \end{equation}
    where the supremum is taken over all $\vx, \vx_0 \in K \cap S^{n-1}$ satisfying $\twonorm{\vx - \vx_0} \leq \d$.

    \item (Deviation of sums.)  		\label{part:deviation of sums}
    Let $s$ be a natural number.  With probability at least $1 - 2 \exp(- s \log(em/s)/2)$ we have
    \begin{align}		
    \sup_{\vz, T} \sum_{i \in T} |\< \va_i, \vz \> |
    \leq s \left(\sqrt{\frac{8}{\pi}} + \frac{4 w(K)}{\sqrt{s}} + 2 \sqrt{\log(em/s)}\right).\label{eq:deviation of summation}
    \end{align}
    where the supremum is taken over all $\vz \in K-K$ and all subsets $T \subset [n]$ with cardinality $\abs{T} \leq s$.
\end{enumerate}
\end{lemma}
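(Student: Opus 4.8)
The plan is to establish the three parts independently, each by feeding a previously-proved uniform estimate into an appropriate union bound; throughout I may assume $\d \le 1$ (otherwise the assertions are vacuous) and use that $K-K \subseteq 2 B_2^n$, so that $\twonorm{\vz} \le \diam(K) \le 2$ for every $\vz \in K-K$. For part \ref{part:on net}, I would fix a single $\vx_0 \in N_\d$ and note that, since $\tau = 0$, the function $f_{\vx_0}$ is exactly the objective \eqref{eq:def f} of the noiseless model $\theta(z)=\sign(z)$ with signal $\vx_0$; hence Proposition~\ref{prop:concentration} applies verbatim with $\vx$ replaced by $\vx_0$, giving
$$
\Pr{\sup_{\vz \in K-K} \abs{f_{\vx_0}(\vz) - \E f_{\vx_0}(\vz)} \ge \frac{4 w(K)}{\sqrt m} + t} \le 4\exp(-m t^2/8).
$$
Because $m \ge C \d^{-6} w(K)^2$, the term $4w(K)/\sqrt m$ is at most $\d/2$ once $C$ is large, so choosing $t = \d/2$ bounds the deviation by $\d$ with failure probability $4\exp(-c m \d^2)$. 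A union bound over $N_\d$ together with the Sudakov cardinality estimate \eqref{eq:N delta}, $\log\abs{N_\d} \le C \d^{-2} w(K)^2$, yields total failure probability at most $4\exp\big(C\d^{-2}w(K)^2 - c m \d^2\big)$; the hypothesis $m \ge C\d^{-6}w(K)^2 \ge C\d^{-4}w(K)^2$ ensures the $c m\d^2$ term dominates, leaving $4\exp(-c' m \d^2)$ as required.

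For part \ref{part:sign patterns}, the key observation is that $\abs{T(\vx,\vx_0)} = d_H\big(\sign(\mA\vx),\sign(\mA\vx_0)\big)$, so this is precisely where the tessellation theorem enters. On the event of Theorem~\ref{thm:tessellations} (valid with probability $1 - 2\exp(-c\d^2 m)$ under the stated bound on $m$) we have $\tfrac1m \abs{T(\vx,\vx_0)} \le \tfrac1\pi d_G(\vx,\vx_0) + \d$ for all $\vx,\vx_0 \in K \cap S^{n-1}$. It then remains to convert the Euclidean constraint into a geodesic one: since $\twonorm{\vx-\vx_0} = 2\sin\!\big(d_G(\vx,\vx_0)/2\big)$ and $\sin\theta \ge (2/\pi)\theta$ on $[0,\pi/2]$, one gets $d_G(\vx,\vx_0) \le \tfrac\pi2 \twonorm{\vx-\vx_0} \le \tfrac\pi2 \d$, whence $\tfrac1\pi d_G \le \d/2$ and $\tfrac1m\abs{T(\vx,\vx_0)} \le \tfrac32 \d \le 2\d$.

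For part \ref{part:deviation of sums}, I would first reduce to $\abs T = s$ (enlarging a smaller $T$ only increases the nonnegative sum) and, for a fixed index set $T$ of size $s$, write $\sum_{i\in T}\abs{\< \va_i,\vz\> } = \onenorm{\mA_T \vz}$, where $\mA_T$ is the $s\times n$ Gaussian submatrix with rows $\va_i$, $i\in T$. Applying the $\ell_1$-embedding Lemma~\ref{lem:l1 embedding} to $\mA_T$ (that is, with $m$ replaced by $s$) and using $\twonorm{\vz}\le 2$, I obtain, uniformly over $\vz\in K-K$,
$$
\onenorm{\mA_T\vz} \le s\Big(\sqrt{8/\pi} + \frac{4 w(K)}{\sqrt s} + u\Big)
$$
with failure probability $2\exp(-s u^2/2)$. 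A union bound over the $\binom{m}{s}\le (em/s)^s$ choices of $T$, together with the calibration $u = 2\sqrt{\log(em/s)}$, makes the total failure probability at most $2\exp\big(s\log(em/s) - 2 s\log(em/s)\big) = 2\exp(-s\log(em/s)) \le 2\exp(-s\log(em/s)/2)$, which gives exactly \eqref{eq:deviation of summation}.

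The genuinely deep input is Theorem~\ref{thm:tessellations}, invoked as a black box in part \ref{part:sign patterns}; the remainder of that part is an elementary geodesic-versus-chord comparison. Within the present argument the most delicate bookkeeping is in part \ref{part:deviation of sums}, where the exponentially large count $\binom{m}{s}$ of index sets must be absorbed, forcing the logarithmic calibration $u \sim \sqrt{\log(em/s)}$ rather than a constant deviation. In part \ref{part:on net} one must likewise verify that the net size $\exp(C\d^{-2}w(K)^2)$ is overwhelmed by the concentration exponent $c m\d^2$, which is exactly what the hypothesis $m \gtrsim \d^{-6}w(K)^2$ (indeed already $\d^{-4}w(K)^2$) supplies.
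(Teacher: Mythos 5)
Your proof follows the paper's own argument essentially step for step: part~\ref{part:on net} via Proposition~\ref{prop:concentration} for a fixed net point with $t=\delta/2$, the bound $4w(K)/\sqrt{m}\le \delta/2$ from the hypothesis on $m$, and a union bound over the Sudakov net \eqref{eq:N delta}; part~\ref{part:sign patterns} via the identity $\abs{T(\vx,\vx_0)} = d_H\bigl(\sign(\mA\vx),\sign(\mA\vx_0)\bigr)$ and Theorem~\ref{thm:tessellations} combined with a chord--geodesic comparison; and part~\ref{part:deviation of sums} via Lemma~\ref{lem:l1 embedding} applied to $\mP_T\mA$ with a union bound over the $\binom{m}{s}\le \exp(s\log(em/s))$ index sets and the calibration $u=2\sqrt{\log(em/s)}$. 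The only cosmetic differences (your explicit $\sin$-based estimate $d_G \le \tfrac{\pi}{2}\twonorm{\vx-\vx_0}$, giving $\tfrac32\delta$ in place of the paper's $2\delta$, and your explicit reduction to $\abs{T}=s$) do not change the argument.
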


\begin{proof}[Proof of Proposition \ref{prop:uniform concentration}]
Let us apply Lemma~\ref{lem:helpful events}, where in part~\ref{part:deviation of sums} 
we take $s=2 m \d$ rounded down to the next smallest integer.  
Then all of the events \eqref{eq:concentration on cover}, \eqref{eq:deviation of signs}, \eqref{eq:deviation of summation}
hold simultaneously with probability at least $1 - 8 \exp(-c m \d^2)$.

Recall that our goal is to bound the deviation of $f_{\vx}(\vz)$ from its expectation uniformly over $\vx \in K \cap S^{n-1}$ and $\vz \in K-K$.
To this end, given $\vx \in K \cap S^{n-1}$ we choose $\vx_0 \in N_\d$ so that $\twonorm{\vx - \vx_0} \leq \d$.  
By \eqref{eq:f tildef} and definition of the set $T(\vx, \vx_0)$ in Lemma~\ref{lem:helpful events}, we can approximate
$f_{\vx}(\vz)$ by $f_{\vx_0}(\vz)$ as follows:
\begin{equation}			\label{eq:bound by T}
\abs{f_{\vx}(\vz) - f_{\vx_0}(\vz)} 
\leq \frac{2}{m} \sum_{i \in T(\vx, \vx_0)} \abs{\< \va_i, \vz\> }.
\end{equation}
Furthermore, \eqref{eq:deviation of signs} guarantees that $|T(\vx, \vx_0)| \leq 2m \delta$.  
It then follows from \eqref{eq:deviation of summation}, our choice $s=2 m \d$ and the assumption on $m$ that
$$
\sum_{i \in T(\vx, \vx_0)} \abs{\< \va_i, \vz\>} \leq C m \delta \sqrt{\log(e/\delta)}.
$$
Thus
$$
\abs{f_{\vx}(\vz) - f_{\vx_0}(\vz)} \leq 2C \delta \sqrt{\log(e/\delta)}.
$$
Combining this with \eqref{eq:concentration on cover} we obtain
\begin{equation}							\label{eq:fx Efx0}
\abs{f_{\vx}(\vz) - \E f_{\vx_0}(\vz)} \leq C_1 \delta \sqrt{\log(e/\delta)}.
\end{equation}
Further, recall from \eqref{eq:expected f} that $\E f_{\vx_0}(\vz) = \sqrt{2/\pi} \, \< \vx, \vz \> $ and thus
\begin{align}							
&|\E f_{\vx_0}(\vz) - \E f_{\vx}(\vz)| 
= \sqrt{2/\pi} \, \abs{\< \vx_0 - \vx, \vz \> } \nonumber \\
&\leq 2 \sqrt{2/\pi} \twonorm{\vx_0 - \vx} 
\leq 2 \sqrt{2/\pi} \, \d.\label{eq:fx0 fx}
\end{align}
The last two inequalities in this line follow since $\vz \in K-K \subset 2B_2^n$ and $\|\vx_0 - \vx\|_2 \le \d$.
Finally, we combine inequalities \eqref{eq:fx Efx0} and \eqref{eq:fx0 fx} to give
$$
\abs{f_{\vx}(\vz) - \E f_{\vx}(\vz)} \leq C_2 \delta \sqrt{\log(e/\delta)}.
$$
Note that we can absorb the constant $C_2$ into the requirement $m \geq C \d^{-6} w(K)^2$. 
This completes the proof in the case where $\tau=0$. 

In the general case, we only need to tweak the above argument by increasing the size of $s$ 
considered in \eqref{eq:deviation of summation}. Specifically, it is enough to choose $s$ 
to be $\tau m + 2 m \d$ rounded down to the next smallest integer.
This allows one to account for arbitrary $\tau m$ bit flips of the numbers $\sign(\< \va_i, \vx\> )$, 
which produce the difference between $f_{\vx}(\vz)$ and $\tilde{f}_{\vx}(\vz)$.
The proof of Proposition \ref{prop:uniform concentration} is complete.
\end{proof}

It remains to prove Lemma~\ref{lem:helpful events} that was used in the argument above. 

\begin{proof}[Proof of Lemma \ref{lem:helpful events}]  
To prove part~\ref{part:on net}, we can use Proposition~\ref{prop:concentration} combined with the union bound
over the net $N_\delta$. Using the bound \eqref{eq:N delta} on the cardinality of $N_\delta$ we obtain 
\begin{align*}
&\Pr{\sup_{\vx_0, \vz} \abs{f_{\vx_0}(\vz) - \E f_{\vx_0}(\vz)} \geq 4 w(K)/\sqrt{m} + t} \\
  &\leq \abs{N_\d} 4 \exp(-m t^2/8)\\
&\leq 4 \exp \big(-m t^2/8 + C \d^{-2} w(K)^2 \big)
\end{align*}
where the supremum is taken over all $\vx_0 \in N_\d$ and $\vz \in K-K$.  
It remains to chose $t = \d/2$ and recall that $m \geq C \delta^{-6} w(K)^2$ to finish the proof.

We now turn to part~\ref{part:sign patterns}.  
First, note that $\abs{T(\vx,\vx_0)} = d_H (\sign(\mA \vx), \sign(\mA \vx_0))$.  
Theorem \ref{thm:tessellations} demonstrates that this Hamming distance is almost isometric to the geodesic distance,
 which itself satisfies $\frac{1}{\pi} d_G(\vx, \vx_0) \leq \twonorm{\vx - \vx_0} \leq \delta$.  
Specifically, Theorem \ref{thm:tessellations} yields that under our assumption that $m \geq C \delta^{-6} w(K)^2$, 
with probability at least $1 - 2 \exp(-c \delta^2 m)$ one has 
\begin{equation}		\label{eq:bound T}
\abs{T(\vx,\vx_0)} \leq 2m\delta
\end{equation}
for all $\vx,\vx_0 \in K \cap S^{n-1}$ satisfying $\twonorm{\vx - \vx_0} \leq \delta$. This proves part~\ref{part:sign patterns}.

In order to prove part  \ref{part:deviation of sums}, we may consider the subsets $T$ satisfying $|T| = s$; 
there are ${m \choose s} \leq \exp(s \log(e m/s))$ of them.
Now we apply Lemma~\ref{lem:l1 embedding} for the $T \times n$ matrix $\mP_T \mA$ where 
$\mP_T$ denotes the coordinate restriction in $\R^m$ onto $\R^T$; so in the statement of Lemma~\ref{lem:l1 embedding}
we replace $m$ by $|T|=s$. Combined with the union bound over all $T$, this gives
$$
\Pr{\sup_{\vz, T} \frac{1}{s} \sum_{i\in T} \abs{\< \va_i, \vz\>} 
\geq \sqrt{\frac{2}{\pi}} \twonorm{\vz} + \frac{4 w(K)}{\sqrt{s}} + u} 
\leq 2 \exp\left(s \log(e m /s) - \frac{s u^2}{2}\right).
$$
Recall that $\|\vz\|_2 \le 2$ since $\vz \in K-K \subset 2B_2^n$. Finally, 
we take $u^2 = 4 \log(em/s)$ to complete the proof.
\end{proof}

\section{Discussion} \label{sec:conclusion}
Unlike traditional compressed sensing, which has already enjoyed an extraordinary wave of theoretical results, 
1-bit compressed sensing is in its early stages.  In this paper, we proposed a polynomial-time solver (given by a convex program)
for noisy 1-bit compressed sensing, and we gave theoretical guarantees on its performance. 
The discontinuity inherent in 1-bit measurements led to some unique mathematical challenges.  
We also demonstrated the connection to sparse binomial regression, and derived novel results for this problem as well.

The problem setup in 1-bit compressed sensing (as first defined in \cite{Boufounos2008}) is quite elegant, allowing for a theoretical approach.  
On the other hand, there are many compressed sensing results assuming substantially finer quantization.  
It would be of interest to build a bridge between the two regimes; for example, 2-bit compressed sensing would already open up new questions. 

\bibliographystyle{acm}
\bibliography{pv-1bitcs-robust}
%

\end{document}